\documentclass[a4paper,12pt]{article}

\usepackage{amsmath,amssymb,amsthm}
\usepackage[T1]{fontenc}

\usepackage{slashbox}

\usepackage{float}
\usepackage{subfig}
\usepackage[a4paper]{geometry}

\usepackage{array}

\usepackage{stmaryrd}

\usepackage{ifpdf}
\ifpdf
   \usepackage[pdftex,bookmarks=true,bookmarksnumbered,%
    plainpages=false,hypertexnames=false,pdfpagelabels,%
    colorlinks=true,linkcolor=black,urlcolor=blue,citecolor=red,%
    pdfstartview=FitH]{hyperref}
   \usepackage[pdftex]{graphicx,color}
\else
    \usepackage[dvips,bookmarks=true,bookmarksnumbered,plainpages=false,%
     colorlinks=true,linkcolor=black,urlcolor=blue,citecolor=red,%
     hypertexnames=false,pdfstartview=FitH,breaklinks=true]{hyperref}
    \usepackage{pstricks,pst-node,pst-text,pst-3d}
    \usepackage[dvips]{graphicx}
    \usepackage{draftcopy}
\fi

\newtheorem{theorem}{Theorem}
\newtheorem{prop}[theorem]{Proposition}
\newtheorem{lemma}[theorem]{Lemma}
\theoremstyle{definition}
\newtheorem{defn}[theorem]{Definition}
\newtheorem{hypo}[theorem]{Hypothesis}

\theoremstyle{remark}
\newtheorem{rem}[theorem]{Remark}

 \newcommand\overrel[2]{\mathrel{\mathop{#2}\limits^{#1}}}

\newcommand{\bfn}{\boldsymbol{n}}
\newcommand{\balpha}{\boldsymbol{\alpha}}
\newcommand{\bpi}{\boldsymbol{\pi}}

\newcommand{\alphanjji}{\alpha^{\bfn_{(j)}}(j,i)}
\newcommand{\alphanjjik}{\alpha^{\bfn_{(j)}}_\gamma(j,i)}

\newcommand{\piji}{\pi(j,i)}
\newcommand{\piij}{\pi(i,j)}
\newcommand{\pikj}{\pi(k,j)}
\newcommand{\pijN}{\pi(j,N)}

\definecolor{rouge2}{RGB}{230,68,57}  
\definecolor{darkred}{RGB}{150,1,100}  

\usepackage{enumerate}
\usepackage{bbm}
\newcommand{\Ind}{\mathbbm{1}}

\usepackage[round]{natbib}

\usepackage{authblk}
\setcounter{Maxaffil}{3}

\setlength{\parskip}{1ex plus 0.5ex minus 0.2ex}

\begin{document}
\title{Liquidity costs: a new numerical methodology and an empirical study\footnote{This work is issued from a collaboration Contract between Inria Tosca team and 
Interests Rates and Hybrid Quantitative Research team of Credit Agricole CIB.}}

\author{Christophe Michel$^1$,
  Victor Reutenauer$^2$,\\
 Denis Talay$^3$ and  Etienne Tanr\'e$^3$
}

\footnotetext[1]{CA-CIB, \texttt{christophe.michel@ca-cib.com}}
\footnotetext[2]{Fotonower (former CA-CIB), \texttt{victor@fotonower.com}}
\footnotetext[3]{Inria, \texttt{Denis.Talay@inria.fr}, \texttt{Etienne.Tanre@inria.fr}}

\maketitle

\noindent\textbf{Keywords: }Interest rates derivatives, Optimization, Stochastic Algorithms.

\begin{abstract}
We consider rate swaps which pay a fixed rate against a floating rate
in presence of bid-ask spread costs.
Even for simple models of bid-ask spread costs,
there is no explicit strategy
optimizing an expected function of the hedging error.
We here propose an efficient algorithm
based on the stochastic gradient method
to compute an approximate optimal strategy without solving a 
stochastic control problem.
We validate our algorithm by numerical experiments.
We also develop several variants of the algorithm
and discuss their performances 
in terms of the numerical parameters and the liquidity cost.
\end{abstract}

	\section{Introduction}
	Classical models in financial mathematics usually assume that 
	markets are perfectly liquid.
	In particular, each trader can buy or
	sell the amount of assets he/she needs at the same price
	(the ``market price''), 
	and the trader's decisions do not affect the price of the asset.
	In practice, the assumption of perfect liquidity is never satisfied
	but the error due to illiquidity is generally negligible with respect to
	other sources of error such as model error or calibration error, etc. 
	
	However, the perfect liquidity assumption does not hold true in practice
		for
		interest rate derivatives market: 
		the liquidity costs to hedge interest rate derivatives 
		are highly time varying. Even though there exist maturities for which
		zero-coupon
		bonds are liquid, bonds at intermediate maturities may be extremely illiquid
		(notice that the underlying interest rate is not directly exchangeable).
	Therefore, hedging such derivatives absolutely needs to take liquidity
	risk into account.
	In this context, defining and computing efficient approximate perfect hedging
	strategies is a complex problem. The main purpose of this paper is to
	show that stochastic optimization methods are powerful tools to treat it
	without solving a necessarily high dimensional stochastic control problem,
	under the constraints that practitioners need to trade at prescribed dates
	and that relevant strategies depend on a finite number of parameters.
	More precisely, we construct and analyze an efficient original numerical 
	method which provides practical strategies facing liquidity costs and minimizing
	hedging errors.

	The outline of the paper is as follows. Section~\ref{sec oursettings} introduces
	the model. 
	In Section~\ref{sec gaussianframework}, we 
	present our numerical method and analyze it from a theoretical point of view
	within the framework of a Gaussian yield curve model.
	Section~\ref{sec numericalwithoutliquidity} is devoted to a numerical
	validation in the idealistic perfect liquidity context. 
	In Section~\ref{sec numericalwithliquidity}, we develop an empirical study 
	of the efficiency of our algorithm in the presence of liquidity costs.

	\section{Our settings: swaps with liquidity cost}\label{sec oursettings}
	\subsection{A short reminder on swaps and swaptions
		hedging without liquidity cost}\label{sub:rappelsswap}
	One of the most common swaps on the interests rate market is as follows.
	The counterparts  exchange two coupons: the first one is generated
	by a bond (with a constant fixed interest rate) and the second 
	one is generated by a  floating rate (e.g. a LIBOR).
	\begin{defn}\label{def1}
		In a perfectly liquid market, the price at time \(t\) of a zero-coupon bond 
		paying \(1\) at time \(T\)
		is denoted by \(B(t,T)\). 
		The linear forward rate
		\(L(T_F,T_B,T_E)\) is the 
		fair rate decided at time \(T_F\) determining the amount at time \(T_E\)
			obtained by investing 1 at time \(T_B\).
			The following relation is satisfied:
		\begin{equation}\label{eq:lienzerocouponforward}
			L \left( T_F, T_B, T_E \right) = \dfrac{1}{T_E - T_B} \left( \dfrac{B
				\left(T_F, T_B \right) }{B \left(T_F, T_E \right)} - 1 \right).
		\end{equation}
	\end{defn}
	
	A swap contract specifies:
	\begin{itemize}
		\item an agreement date \(t\)
		\item a time line  \((t\leq )~T_0 < \cdots < T_N\)
		\item a fixed interest rate \(r\)
		\item a floating interest rate 
		\item the payoff at each time \(T_i\) (\(1\leq i\leq N \)), that is,
		\begin{equation}\label{eq payoff of the swap}
			P(i) := \left(T_i - T_{i-1} \right) \left( r - L \left( T_{i-1}, T_{i-1}, T_i
			\right)\right).
		\end{equation}
		From \eqref{eq:lienzerocouponforward}, we deduce the equivalent
		expression
		\begin{equation}\label{eq:deuxiemeformepayoffswap}
			P(i) = r\left(T_i - T_{i-1} \right)-\dfrac{1}{B(T_{i-1},T_i)}+1.
		\end{equation}
	\end{itemize}
	In the sequel, we consider that the fixed rate \(r\) is chosen at the money
	(thus the swap 
	at time \(t\) has zero value), and that 
	the swap fixed coupons are received by the trader.

	In the idealistic framework of a market without liquidity cost,
	the trader buys or sells quantities of zero-coupon bonds at 
	the same price (i.e. the \textit{market price}), and there
	exists a discrete time perfect hedging strategy which is
	independent of any model of interest rates.
	In view of \eqref{eq:deuxiemeformepayoffswap}, the replication of the payoff
	\(P(i)\) 
	at time \(T_i\) can be split into three parts:
	\begin{itemize}
		\item the fixed part \(r(T_i-T_{i-1})\) 
		is replicated statically at time \(t\) by selling 
		\(r(T_i-T_{i-1})\) zero-coupon 
		bonds with maturity \(T_i\).
		\item 
		the floating part \(1/B(T_{i-1}, T_i)\) 
		is replicated dynamically at time \(T_{i-1}\) by 
		buying \(1/B(T_{i-1}, T_i)\) zero-coupon bonds with maturity \(T_i\).
		The price of this transaction is equal to \(1\).
		\item the last (fixed) part \(1\) is used at time \(T_i\)
		to buy \(1/B(T_{i}, T_{i+1})\) zero-coupon bonds with maturity \(T_{i+1}\).
	\end{itemize}
	It is easy to see that this strategy is self-financing at times \(T_1, \cdots,
	T_{N-1}\).
	To make it self-financing  at time \(t\) also, at this date one buys  
	\(1\) zero-coupon bond with maturity \(T_0\) and sells
	\(1\) zero-coupon bond with maturity \(T_N\).

	To summarize, in the idealistic framework, we do not need
		to consider 
		\((\mathcal{F}_\theta,\theta \geq 0)\) adapted hedging strategies, where 
		\((\mathcal{F}_\theta,\theta \geq 0)\) 
		is the filtration generated by the observations (short rates, derivative
		prices, etc.) in continuous time,
		and we may restrict the admissible strategies to
		be adapted to the filtration 
		generated by the 
		observations at times \(\{t, T_0, T_1, \cdots, T_N\}\).

	\subsection{Hypotheses on markets with liquidity
		costs}\label{subsec:hedgingswap}
	We now consider markets with liquidity costs and need to precise 
	our liquidity cost model. In all the sequel \(T_{-1}\) denotes \(t\).

	\begin{hypo}~\\ \label{hypo:controlemesurable}
		We assume that, for all $-1\leq j<i\leq N$,  the number \(
		\piji
		\) of zero-coupon 
		bonds with maturity \(T_i\) bought or
		sold at time \(T_j\) is measurable with respect to
		the filtration generated by \((R_t, R_{T_0},\cdots,R_{T_j})\). That means that 
		the admissible strategies do not depend on the evolution
		of the rate \(R_\theta\) between two tenor dates \(T_m\) and \(T_{m+1}\).
	\end{hypo}

	Denote by \(\Psi(T,U,\pi)\) the buy or sell price for \(\pi\) zero
	coupon bonds. In perfectly liquid markets, \(\Psi(T,U,\pi)\) is the
	linear function \(B(T,U)\pi\), where \(B(T,U)\)
	is defined in Definition~\ref{def1}.
	In the presence of liquidity costs, \(\Psi(T,U,\pi)\) becomes a non-linear
	function of
	\(\pi\).
	\begin{hypo}~\\ \label{hypo:regularitepsi}
		For all \(T\) and \(U\), the price \(\Psi(T,U,\pi)\) is a \(C^1(\mathbb{R})\),
		increasing, convex 
		one-to-one map of \(\pi\) from \(\mathbb{R}\) to \(\mathbb{R}\), and
		\(\Psi(T,U,0)=0\).
	\end{hypo}
	Under the preceding hypothesis, the function \(\Psi\) is positive  when \(\pi >
	0\) and negative 
	when \(\pi < 0\).
	
	In the context of the swap
	we set
	\begin{equation}\label{eq psiij}
		\Psi_{i,j}(\pi) := \Psi(T_i,T_j,\pi)
	\end{equation}
	and we only consider self-financing strategies, that is, satisfying
	\begin{equation}\label{eq selffinancing}
		\forall\, 0\leq j\leq N-1, \quad \sum_{-1\leq k<j}
		\pikj
		+
		P(j)
		= \sum_{j<i\leq N}\Psi_{j,i}(
		\piji
		).
	\end{equation}

	\subsection{Optimization objective}
	In the presence of liquidity risk, the market is 
	not complete any more
	and the
	practitioners need to build a strategy which minimizes a given function $S$
	(e.g. a risk measure) of the hedging error.
	Such strategies are usually obtained by solving stochastic control problems.
	These problems require high complexity numerical algorithms
	which are too slow to be used in practice.
	We here propose an efficient and original numerical method to compute 
	approximate optimal strategies.
	As the perfect hedging leads to a null portfolio at time \(T_N\), we have
	to solve the optimization problem
	\begin{equation}\label{eq:optimizationgenerale}
		\inf_{\bpi\in\Pi}
		\mathbb{E}\left[S(W^{\bpi})\right],
	\end{equation}
	where \(W^{\bpi}\) is the terminal wealth (at time \(T_N\)) given the
	strategy \(\bpi\) in the set \(\Pi\) of admissible strategies.

	\section{Hedging error minimization method in a Gaussian framework}\label{sec
		gaussianframework}
	The methodology we introduce in this section is based on the two
	following key observations:
	\begin{enumerate}
		\item[(1)]
		We consider strategies and portfolios with finite second moment, and thus
		optimize
		within \(L^2(\mu)\) for some probability measure \(\mu\).
		The Gram-Schmidt procedure provides countable orthogonal bases \(\mathcal{B}\)
		of the separable Hilbert space \(L^2(\mu)\).
		Our set \(\Pi\) of admissible strategies is obtained by truncating of a given
		basis,
		which reduces the a priori infinite dimensional optimization problem 
		\eqref{eq:optimizationgenerale} to a finite dimensional parametric optimization
		problem of the type
		\(\inf_{\theta\in\Theta} \mathbb{E}\Psi(\theta,X)\),
		where \(\Theta\) is a subset of \(\mathbb{R}^p\), \(X\) is a given random
		variable, \(\Psi\) is a 
		convex function of \(\theta\).
		\item[(2)]
		The Robbins-Monro algorithm and its Chen extension
		are  stochastic alternatives to Newton's method
		to numerically solve such optimization problems.
		These algorithms 
		do not require
		to compute  \(\tfrac{d}{d\theta}\mathbb{E}\Psi(\theta,X)\).
		They are based on sequences of the type
		\begin{equation}\label{eq:evoltionrobinsmonro}
			\theta_{\gamma+1} = \theta_\gamma - \rho_{\gamma+1}\dfrac{\partial}{\partial
				\theta}\Psi(\theta_\gamma,X_{\gamma+1}),
		\end{equation}
		where \((\rho_\gamma,\gamma\geq 1)\) is  a decreasing sequence and
		\((X_\gamma,\gamma\geq 1)\) is an 
		i.i.d. sequence of random variables distributed as \(X\).
	\end{enumerate}

	We here consider the case of swap
	in the context of a Gaussian
	yield curve. 
	This assumption is restrictive from a mathematical point of view but is
	satisfied by widely 
	used interest rate models such as Vasicek model, Gaussian affine models
	or HJM (Heath, Jarrow and Morton) models with deterministic volatilities
		(see e.g. \cite{GLT2010} and \cite{musiela_rutkowski_2005}). In
	\citet{gen_vasicek_1999} it is shown 
	that using a three dimensional Gaussian model is sufficient to fit the term
	structure 
	of interest rate products.
	
	\subsection{Step 1: finite dimensional projections of the admissible controls
		space}\label{subsection:finitiedimensionalprojection}
	Consider a Gaussian 
	short
	rate  model \((R_\theta, \theta\geq 0)\)
	: we either suppose that the dynamics of the short rate model is given,
		or that it is deduced from a forward rate model such as in the HJM approach
		for term structures: see e.g. Eq (6.9) in \cite{GLT2010} under the additionnal
		assumption that
		the forward rate volatilities are deterministic.
		In all cases, the resulting bond price model is log-normal.
	
	In view of Hypothesis~\ref{hypo:controlemesurable}, each
	control \(
	\piji
	\) belongs to the Gaussian space
	generated by \((R_t,R_{T_0},\cdots,R_{T_j})\) or,
	equivalently, to a space generated by \(\ell(j)+1\) standard 
	independent Gaussian
	random variables
	\(G^{(0)},G^{(1)},\cdots,G^{(\ell(j))}\) (with 
	\(\ell(j) = j\) for one-factor models, \(\ell(j) = 2j+1\) for two-factor models,
	etc.)  and \(R_t\). 
	An explicit \(L^2\) orthonormal basis 
	of the space generated by  \(G^{(0)},G^{(1)},\cdots,G^{(\ell(j))}\) is
	\begin{equation}\label{eq baseorthonormalehermite}
		\left(\prod_{m=0}^{\ell(j)}\dfrac{H_{n_m}(G^{(m)})}{\sqrt{n_m!}}\right)_{(n_0,\cdots,n_{\ell(j)})
			\in\mathbb{N}^{\ell(j)+1}},
	\end{equation}
	where \((H_n, n\geq 0)\) are the Hermite polynomials
	\[
	H_n(x) = (-1)^ne^{x^2/2}\frac{d^n}{dx^n}e^{-{x^2}/{2}}
	\]
	\citep[see e.g. ][p.236]{Malliavin_IP}.
	
	Thus, the quantities of zero-coupon bonds bought by the trader can be written as
	\begin{equation}\label{decompositionL2}
		\sum_{\bfn_{(j)}\in\mathbb{N}^{\ell(j)+1}}
		\alphanjji
		\prod_{m=0}^{\ell(j)}\dfrac{H_{n_m}(G^{(m)})}{\sqrt{n_m!}},
		\quad 
		\bfn_{(j)} = (n_0,\cdots,n_{\ell(j)}),
	\end{equation}
	where the infinite sum has an \(L^2\) limit sense. 
	A strategy can now be defined as a sequence of real numbers 
	\(\alphanjji\)
	for all \(-1\leq j<i\leq N-1\) and 
	\(\bfn_{(j)} \in\mathbb{N}^{\ell(j)+1}\).

	In order to be in a position to solve a finite dimensional optimization problem,
	we truncate the sequence 
	\((\alphanjji)\).
	Then a strategy is defined by a finite number of real parameters
	\(\{
	\alphanjji, -1\leq j<i\leq N-1\}\) 
	where, for all \(j\), \(\bfn_{(j)}\) belongs to
	a finite subset \(\Lambda^{(j)}\) of \(\mathbb{N}^{\ell(j)+1}\).
	The truncated quantities of zero-coupon bonds bought by the trader write
	\begin{equation}\label{eq:truncation}
		\piji
		= \sum_{\bfn_{(j)}\in\Lambda^{(j)}}
		\alphanjji
		\prod_{m=0}^{\ell(j)}\dfrac{H_{n_m}(G^{(m)})}{\sqrt{n_m!}}.
	\end{equation}

	We discuss the efficiency of this truncation and its convergence 
	in Sections~\ref{subsec35} and \ref{SectionTruncationSpace}.
	
	To simplify, we denote by 
	\(\balpha = \left(
	\alphanjji
	\right)_{i, j, \bfn_{(j)}}\)  
	the parameters to optimize in \(\mathbb{R}^p\) 
	(where the dimension \(p\) is known 
	for each truncation \((\Lambda^{(j)},j=-1,0,\cdots,N-1))\),
	by \(\bpi\left(\balpha \right)\) 
	(or, when no confusion is possible, simply \(\bpi\))
	the hedging strategy
	corresponding to a vector \(\balpha\in\mathbb{R}^p\), see \eqref{eq:truncation}.
	Given the strategy \(\bpi = \bpi\left(\balpha \right)\), the terminal 
	wealth \(W^{(\balpha)}\) (at time \(T_N\)) 
	satisfies
	\begin{equation}
		W^{(\balpha)}  
		= \sum_{-1\leq j<N}
		\pi(\balpha)(j,N)
		+ P(N)
		= \sum_{-1 \leq j<N}
		\pijN
		+ P(N).
		\label{eqn_final_wealth}
	\end{equation}
	The problem \eqref{eq:optimizationgenerale} is now formulated
	as: find \(\balpha^*\) in \(\mathbb{R}^p\) such that
	\begin{equation}
		\label{eq:optimizationparametric}
		\mathbb{E}\left[ S\left(W^{(\balpha^*)}\right)\right] =
		\inf_{\balpha\in\mathbb{R}^p}\mathbb{E}\left[
		S\left(W^{(\balpha)}\right)\right].
	\end{equation}
	\subsection{Step 2: stochastic optimization}
	Using the self-financing equation \eqref{eq selffinancing} one can express
	\(W^{(\balpha)}\)
	as a function of \(\balpha\) and \((G^{(0)},\cdots,G^{(\ell(N))})\).
	Therefore one needs to minimize the expectation of a deterministic function of
	the parameter \(\balpha\) in \(\mathbb{R}^p\) and the random
	vector \((G^{(0)},\cdots,G^{(\ell(N))})\).
	Such problems can be  solved numerically by classical stochastic optimization
	algorithms, 
	such as those introduced in the pioneering work of \citet{robbins_monro_51} and
	its extensions \citep[e.g.][]{chen_zhu_86}. 
	We refer the interested reader to the classical 
	references \citet{duflo_97,chen_2002,kushner_yin}.

	In our context \eqref{eq:optimizationparametric}, the Robbins-Monro algorithm 
	\eqref{eq:evoltionrobinsmonro}
	works as follows.
	Start with an arbitrary initial condition \(\balpha_0\) in \(\mathbb{R}^p\).
	At step \(\gamma+1\), given the current approximation \(\balpha_\gamma\) of the
	optimal value 
	\(\balpha^*\),
	simulate independent Gaussian random variables 
	\((G^{(0)}_{\gamma+1},\cdots,G^{(\ell(N))}_{\gamma+1})\) 
	and compute the terminal wealth \(W^{(\balpha_\gamma)} \).
	Then, update  the parameter \(\balpha\) by the induction formula
	\begin{equation}\label{eq evolution de alpha}
		\balpha_{\gamma + 1} = \balpha_{\gamma} - \rho_{\gamma+1}
		\nabla_{\balpha}\left[  S ( W^{(\balpha_\gamma)})  \right],
	\end{equation} 
	where \((\rho_\gamma)\) is a deterministic decreasing sequence.
	In addition, one can use an improvement of this algorithm due to
	\citet{chen_zhu_86}.
	Let \((K_l,l\geq 0)\) be an increasing sequence of compact sets such that
	\begin{equation}\label{eq:compactcroissant}
		K_l \subset \text{Int}(K_{l+1}) \quad \mbox{ and } \quad \lim_l K_l =
		\mathbb{R}^p,
	\end{equation}
	where \(\text{Int}(K_{l+1})\) denotes the interior of the set \(K_{l+1}\).
	The initial condition \(\balpha_0\) is assumed to be in \(K_0\) and we set 
	\(l(0)=0\).
	At each step \(\gamma\) in \eqref{eq evolution de alpha}, if
	\(\balpha_{\gamma+1}\in K_{l(\gamma)}\),
	we set \(l(\gamma+1) = l(\gamma)\) and go to step \(\gamma+1\). Otherwise, that
	is  
	if \(\balpha_{\gamma+1}\notin K_{l(\gamma)}\),
	we set \(\balpha_{\gamma+1}=\balpha_0\) and \(l(\gamma+1) = l(\gamma) + 1\).
	This modification avoids that the stochastic algorithm may blow up during the
	first steps
	and, from a theoretical point of view, allows to prove its convergence under
	weaker 
	assumptions than required for the standard Robbins-Monro method.
	
	\subsection{Summary of the method}
	\subsubsection*{Our setting}
	\begin{itemize}
		\item 
		The interest rate model satisfies: for all \(0\leq j < N\), there exist 
		an integer \(\ell(j)\) and a function \(\Phi_j\) such that
		\[
		(R_t,R_{T_0},\cdots,R_{T_j}) \overrel{\mathcal{L}}{=}
		\Phi_j(R_t,G^{(0)},\cdots,G^{(\ell(j))}).
		\]
		\item 
		For all \(0\leq j < N\), a finite truncation set
		\(\Lambda^{(j)}\subset\mathbb{N}^{\ell(j)+1}\) is given and
		\(\Lambda^{(-1)}=\{0\}\).
		\item A strategy \(\bpi=\bpi(\balpha)\) is defined by 
		\(\balpha = (
		\alphanjji
		, -1 \leq j<i\leq N-1, \bfn_{(j)} \in\Lambda^{(j)})\). More precisely, the
		number of zero-coupon bonds with maturity \(T_i\)
		bought or sold at time \(T_j\) is
		\begin{equation}
			\label{eq:tronc}
			\piji
			= \sum_{\bfn_{(j)}\in\Lambda^{(j)}}
			\alphanjji
			\prod_{m=0}^{\ell(j)}\dfrac{H_{n_m}(G^{(m)})}{\sqrt{n_m!}}, \quad \mbox{ for } i
			\leq N-1,
		\end{equation}
		and 
		\(\pijN\)
		is deduced from the self-financing equation
		\begin{equation}
			\label{eq:autofin}
			\sum_{i<j}
			\piij
			+
			P(j)
			= \sum_{i>j}\Psi_{j,i}(
			\piji
			)
		\end{equation}
		(one possibly needs to use a classical iterative procedure to solve this
		equation numerically).
		\item One is given an increasing sequence of compact sets \((K_l, l\geq 0)\)
		satisfying \eqref{eq:compactcroissant} and a  sequence 
		of parameters \((\rho_\gamma, \gamma\geq 1)\) decreasing to \(0\).
	\end{itemize}
	\subsubsection*{Our stochastic optimization algorithm}
	Assume that the parameter \(\balpha_{\gamma}= (
	\alphanjjik
	, -1\leq j<i\leq N-1,\bfn_{(j)} \in\Lambda^{(j)})\) and \(l_\gamma\) are given
	at step \(\gamma\). At step \(\gamma+1\):
	\begin{enumerate}
		\item Simulate a Gaussian vector
		\((G^{(0)}_{\gamma+1},\cdots,G^{(\ell(N))}_{\gamma+1})\).
		\item Deduce the quantities of zero-coupon bonds  from \eqref{eq:tronc} and
		\eqref{eq:autofin}:
		\[
		\pi_{\gamma+1}(j,i)
		= \sum_{\bfn_{(j)}\in\Lambda^{(j)}}
		\alphanjjik
		\prod_{m=0}^{\ell(j)}\dfrac{H_{n_m}(G^{(m)}_{\gamma+1})}{\sqrt{n_m!}}, \quad
		\mbox{ for } i \leq N-1,
		\]
		and get \(
		\pi_{\gamma+1}(j,N)
		\) from the self-financing equation
		\[
		\sum_{i<j}
		\pi_{\gamma+1}(j,i)
		+
		P(j)
		= \sum_{i>j}\Psi_{j,i}(
		\pi_{\gamma+1}(j,i)
		).
		\]
		\item Compute the terminal wealth
		\[
		W^{(\balpha_\gamma)}_{\gamma+1}  = \sum_{j<N}
		\pi_{\gamma+1}(j,N)
		+
		P(N).
		\]
		\item Update the parameters 
		\[
		\balpha_{\gamma + 1} = \balpha_{\gamma} - \rho_{\gamma+1}
		\nabla_{\balpha}\left[  S ( W^{(\balpha_\gamma)}_{\gamma+1} )  \right].
		\]
		\item If \(\balpha_{\gamma+1}\notin K_{l_\gamma}\), set
		\(\balpha_{\gamma+1}=\balpha_{0}\) and \(l(\gamma+1) = l(\gamma) + 1\).
		\item Go to 1 (or, in practice, stop after \(\Gamma\) steps).
	\end{enumerate}

	\subsection{Error analysis}\label{subsection error analysis}
	In this subsection we study the convergence (Theorem~\ref{th convergence}) and 
	convergence rate (Theorem~\ref{th rateofconvergence}) of the stochastic
	algorithm
	used in Step 2, when the total number of steps \(\Gamma\) tends to infinity.
	We introduce some notation. Recall \eqref{eq evolution de alpha} and write
	\begin{equation}\label{eq evolalphadecomp}
		\balpha_{\gamma + 1} = \balpha_{\gamma} -
		\rho_{\gamma+1}\mathbb{E}\left[\nabla_{\balpha}S(W^{(\balpha_\gamma)}_{\gamma+1}
		)\right]
		-\rho_{\gamma+1}\delta M_{\gamma+1} + \rho_{\gamma+1}p_{\gamma+1}.
	\end{equation}
	Here, \(\delta M_{\gamma+1}\) is given by
	\begin{equation}\label{eq incrementmart}
		\delta M_{\gamma+1} = \nabla_{\balpha}S(W^{(\balpha_\gamma )}_{\gamma+1} ) 
		- \mathbb{E}\left[\nabla_{\balpha}S(W^{(\balpha_\gamma)}_{\gamma+1} )\right].
	\end{equation}
	The last term \(\rho_{\gamma+1}p_{\gamma+1}\) in \eqref{eq evolalphadecomp}
	represents the \textit{reinitialization} of the algorithm 
	if \(\balpha_{\gamma+1}\notin\ K_{l(\gamma)}\) 
	i.e. \(p_{\gamma+1}\) is fixed such that 
		\(\balpha_{\gamma+1}=\balpha_0\).
	
	Let us now recall the convergence theorem obtained by 
	\citet[Theorem~1]{lelong-2008} in our setting.
	\begin{theorem}\label{th convergence}
		Assume
		\begin{enumerate}
			\item[(A1)] The function \(\balpha \mapsto \mathbb{E}[S(W^{(\balpha)})]\)
			is strictly concave or convex,
			\item[(A2)] \(\sum_\gamma \rho_\gamma =\infty\), \(\sum_\gamma \rho_\gamma^2 <
			\infty\),
			\item[(A3)] The function \(\balpha \mapsto \mathbb{E}[
			\|\nabla_{\balpha} 
			S(W^{(\balpha)})\|^2
			]\) is bounded on compact sets.
		\end{enumerate}
		Then the sequence \((\balpha_\gamma,\gamma\geq 1)\) converges a.s. to the unique
		
		optimal parameter \(\balpha^*\) such that
		\[
		\inf_{\balpha\in\mathbb{R}^p}\mathbb{E}\left[ S\left(W^{(\balpha)}\right)\right]
		= \mathbb{E}\left[ S\left(W^{(\balpha^*)}\right)\right]. 
		\]
	\end{theorem}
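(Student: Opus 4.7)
The plan is to cast our iteration into the abstract framework of Lelong 2008 and then verify, one by one, the three structural ingredients of that framework: a mean field with a unique attractor, a martingale-difference noise with locally controlled conditional variance, and a randomly truncated projection scheme that prevents blow-up. Set $h(\balpha) := \mathbb{E}[\nabla_{\balpha} S(W^{(\balpha)})]$ so that \eqref{eq evolalphadecomp} reads $\balpha_{\gamma+1} = \balpha_{\gamma} - \rho_{\gamma+1} h(\balpha_\gamma) - \rho_{\gamma+1} \delta M_{\gamma+1} + \rho_{\gamma+1} p_{\gamma+1}$, which is exactly the form of the Chen-projected Robbins-Monro recursion.

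First I would establish existence and uniqueness of $\balpha^*$. Under (A1) the function $F(\balpha) := \mathbb{E}[S(W^{(\balpha)})]$ is strictly convex (or concave), so any critical point $\balpha^*$ of $F$, i.e.\ any zero of $h$, is the unique global minimizer. Together with the $L^2$-regularity implicit in (A3), differentiation under the expectation is justified, identifying $h = \nabla F$ and giving $\langle h(\balpha), \balpha - \balpha^*\rangle > 0$ for $\balpha \neq \balpha^*$. This strict monotonicity is precisely the attractivity property feeding the ODE method.

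Next I would verify the probabilistic hypotheses. Denote by $\mathcal{F}_\gamma$ the $\sigma$-field generated by $\balpha_0$ and the Gaussian increments $(G^{(m)}_k)_{m,k\le \gamma}$. Since $(G^{(m)}_{\gamma+1})_m$ is independent of $\mathcal{F}_\gamma$, the process $\delta M_{\gamma+1}$ defined in \eqref{eq incrementmart} is an $\mathcal{F}_\gamma$-martingale difference; assumption (A3) yields $\mathbb{E}[\|\delta M_{\gamma+1}\|^2 \mid \mathcal{F}_\gamma] \leq C_K$ on each compact $K$ traversed by $(\balpha_\gamma)$, and then the square-summability (A2) implies almost sure convergence of $\sum \rho_{\gamma+1} \delta M_{\gamma+1}$ by the standard martingale series theorem. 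The first divergence condition in (A2) supplies the persistent excitation needed for the ODE limit.

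The last step, which I expect to be the main obstacle, is to show that the number of Chen re-initializations is almost surely finite. The argument uses the Lyapunov function $V(\balpha) = F(\balpha) - F(\balpha^*)$: a Taylor expansion of $V$ along the recursion combined with the monotonicity $\langle h(\balpha), \balpha - \balpha^*\rangle \geq 0$ and the $L^2$-bound from (A3) gives, on each compact $K_l$, an inequality of the form $\mathbb{E}[V(\balpha_{\gamma+1}) \mid \mathcal{F}_\gamma] \leq V(\balpha_\gamma) - \rho_{\gamma+1}\langle h(\balpha_\gamma), \balpha_\gamma - \balpha^*\rangle + C\rho_{\gamma+1}^2$ up to the re-initialization term. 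A Robbins-Siegmund almost-supermartingale argument then forces the iterates to remain captured in some $K_{l_\infty}$ with $l_\infty < \infty$ a.s., after which the classical Robbins-Monro convergence conclusion applies and delivers $\balpha_\gamma \to \balpha^*$ almost surely. Everything else amounts to a translation of our data into the hypotheses of Theorem~1 of Lelong 2008, which then yields the claim directly.
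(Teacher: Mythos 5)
Your proposal matches the paper's route: the paper gives no internal proof of this statement at all, but simply invokes it as Theorem~1 of Lelong (2008) applied in the present setting, which is exactly the reduction you arrive at after checking that \((\balpha_\gamma)\) has the Chen-truncated Robbins--Monro form with mean field \(h=\nabla_{\balpha}\mathbb{E}[S(W^{(\balpha)})]\), martingale-difference noise \(\delta M_{\gamma+1}\), and hypotheses (A1)--(A3). Your additional sketch of the internal machinery (strict monotonicity of \(h\), conditional variance bounds on compacts, Robbins--Siegmund capture in a fixed \(K_{l_\infty}\)) is a faithful outline of the cited proof, so the approaches are essentially the same.
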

	
	Hypothesis~\ref{hypo:regularitepsi} and \eqref{eq:tronc} imply that
	(A3) is satisfied. Before giving examples of situations where  
	(A1) is fulfilled, let us check that \(W^{(\balpha)}\) is
	a concave function of \(\balpha\).
	
	\begin{prop}\label{prop convexite}
		The terminal wealth \(W^{(\balpha)}\) is a concave function of the parameter
		\(\balpha\).
	\end{prop}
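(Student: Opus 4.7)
My plan is to use the self-financing equation to express each $\pi(j,N)$ as a concave function of $\balpha$, and then conclude by summation.

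First, by the truncation formula \eqref{eq:tronc}, the quantity $\pi(j,i)$ is an affine function of $\balpha$ for every $-1 \leq j < i \leq N-1$: the coefficients (products of Hermite polynomials of the Gaussian variables) depend only on the noise, not on $\balpha$. Under Hypothesis~\ref{hypo:regularitepsi}, $\Psi_{j,i}$ is convex and increasing on $\mathbb{R}$, and precomposition with an affine map preserves convexity, so for each $j < i \leq N-1$ the map $\balpha \mapsto \Psi_{j,i}(\pi(j,i))$ is convex.

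Next, I isolate the term $i=N$ in the self-financing equation \eqref{eq:autofin}, which gives
\begin{equation*}
\Psi_{j,N}(\pi(j,N))
= \sum_{-1 \leq i < j} \pi(i,j) + P(j)
- \sum_{j < i < N} \Psi_{j,i}(\pi(j,i)).
\end{equation*}
The first sum is affine in $\balpha$ (the second index $j$ satisfies $j \leq N-1$, so each $\pi(i,j)$ is affine by the previous step), $P(j)$ does not depend on $\balpha$, and the last sum is convex in $\balpha$. The right-hand side is therefore a concave function of $\balpha$. By Hypothesis~\ref{hypo:regularitepsi}, $\Psi_{j,N}$ is an increasing convex bijection of $\mathbb{R}$, and the inverse of such a map is increasing and concave. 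Applying the standard composition rule that an increasing concave function of a concave function is concave, I conclude that $\pi(j,N)$ is concave in $\balpha$ for each $-1 \leq j \leq N-1$. Notice that no induction over $j$ is needed: for each $j$, every term other than $\pi(j,N)$ on both sides has second index at most $N-1$ and is thus directly given by \eqref{eq:tronc}.

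Finally, since $P(N)$ does not depend on $\balpha$, the representation \eqref{eqn_final_wealth} exhibits $W^{(\balpha)}$ as a sum of concave functions of $\balpha$ plus a constant, hence concave. The only slightly subtle point is the inversion step: it is precisely the concavity of $\Psi_{j,N}^{-1}$, guaranteed by the convexity and monotonicity of $\Psi_{j,N}$ in Hypothesis~\ref{hypo:regularitepsi}, that lets the algebraic identity ``affine $-$ convex $=$ concave'' translate via composition into concavity of $\pi(j,N)$.
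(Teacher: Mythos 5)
Your proposal is correct and follows essentially the same route as the paper's proof: express $\pi(j,N)$ through the self-financing equation as $\Psi_{j,N}^{(-1)}$ applied to an affine-minus-convex (hence concave) function of $\balpha$, use that the inverse of the increasing convex bijection $\Psi_{j,N}$ is increasing and concave, and sum over $j$. Your explicit remark that no induction is needed (all other quantities have second index at most $N-1$ and are affine by the truncation formula) is a nice clarification of the same argument.
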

	\begin{proof}
		Recall that the terminal wealth is given by  \eqref{eqn_final_wealth}.
		The payoff of the swap \(P(N)\)
		does not depend 
		on \(\balpha\). We only have to deal with the 
		quantities \(
		\pi(i,N)
		\) of the zero-coupon bonds
		with maturity \(T_N\) bought at time \(T_i\).
		They satisfy the self-financing equation \eqref{eq selffinancing} and thus
		\begin{equation}\label{eq piin}
			\pi(i,N) =  \Psi_{i, N}^{(-1)}\left(
			\sum_{j < i} \piji  -  \sum_{i < j < N} \Psi_{i, j}\left( \piij\right)+   
			P(i)
			\right),
		\end{equation}
		where \(\Psi_{i,N}^{(-1)}\) is the inverse of the price function \(\Psi_{i,N}\) 
		(see \eqref{eq psiij}).
		Moreover, quantities \(\piij\), \(i<j\leq N-1\)
		are linear in \(\balpha\) (see \eqref{decompositionL2}).
		
		Recall that \(\Psi_{i,j}\) is convex, thus \(-\Psi_{i,j}\) is concave and
		the argument in \eqref{eq piin} is a 
		concave function of \(\balpha\). Finally \(\Psi_{i,N}^{(-1)}\) is an increasing
		concave function,
		from which  \(\pi(i,N)\) is a concave function of \(\balpha\).
	\end{proof}
	
	
	The preceding observation shows that (A1) is satisfied
	when  \(S\) is a utility function (and thus
	increasing and concave) and satisfies \(S(0) = 0\). 
	Notice that the optimization problem~(\ref{eq:optimizationparametric})
	then penalizes the losses and promotes the gains.
	In Sections~\ref{sec numericalwithoutliquidity} and \ref{sec
		numericalwithliquidity} we will see another situation where 
	Theorem~\ref{th convergence} applies.
	
	
	Given suitable functions $S$,
	Theorem~\ref{th convergence} guarantees the convergence of our algorithm
	towards the optimal 
	parameters. The following theorem provides the rate of convergence
	\citep{lelong2013}.
	\begin{theorem}\label{th rateofconvergence}
		Let
		\begin{equation}\label{eq formedegamman}
			\rho_\gamma := \dfrac{v_1}{\left(v_2 + \gamma \right)^\beta},
		\end{equation} 
		for some positive \(v_1\), \(v_2\) and \(\beta\in(1/2,1)\).
		Denote by \(\Delta_\gamma\) the normalized centered error
		\[
		\Delta_\gamma = \dfrac{\balpha_\gamma-\balpha^*}{\sqrt{\rho_\gamma}}.
		\]
		Assume
		\begin{enumerate}
			\item[(A1)]The function \(\balpha \mapsto \mathbb{E}[S(W^{(\balpha)})]\)
			is concave or convex.
			\item[(A4)] For any \(q >0\), the series 
			\[
			\sum_\gamma\rho_{\gamma+1}\delta
			M_{\gamma+1}\Ind_{\{|\balpha_\gamma-\balpha^*|\leq q\}}
			\]
			converges almost surely.
			\item[(A5)]There exist two real numbers \(A_1>0\) and \(A_2>0\) such that 
			\[
			\sup_\gamma\mathbb{E}\left[|\delta
			M_{\gamma}|^{2+A_1}\Ind_{\{|\balpha_\gamma-\balpha^*|\leq A_2\}}\right]<\infty.
			\]
			\item[(A6)] There exists a symmetric positive definite matrix \(\Sigma\) such
			that
			\[
			\mathbb{E}\left[\delta M_\gamma\delta
			M_\gamma^\mathsf{t}\middle|\mathcal{F}_{\gamma-1}\right]\Ind_{\{|\balpha_{\gamma-1}-\balpha^*|\leq
				A_2\}}
			\mathrel{\mathop{\kern 0pt \longrightarrow}\limits_{\gamma\rightarrow
					\infty}^{\mathbb{P}}}\Sigma.
			\]
			\item[(A7)] There exists \(\mu>0\) such that \(\forall n\geq 0\),
			\(d(\balpha^*,\partial K_n)\geq \mu\),
		\end{enumerate}
		where \(\partial K_n\) denotes the boundary of \(K_n\).
		Then, the sequence \((\Delta_\gamma,\gamma\geq 1)\) converges in distribution to
		a normal random variable 
		with mean \(0\) and  covariance matrix linearly depending  on \(\Sigma\).
	\end{theorem}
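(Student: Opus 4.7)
The plan is to follow the classical Fabian-type strategy for establishing a central limit theorem for constrained Robbins--Monro schemes, adapted in \citet{lelong2013} to the Chen projection variant. The first move is to use assumption (A7) together with the almost sure convergence $\balpha_\gamma \to \balpha^*$ guaranteed by Theorem~\ref{th convergence}: since the limit stays at positive distance from the boundary $\partial K_n$ for every $n$, and since $\balpha_\gamma$ eventually concentrates near $\balpha^*$, the reinitialization events $\{\balpha_{\gamma+1}\notin K_{l(\gamma)}\}$ occur only finitely often almost surely. Consequently, the correction $p_{\gamma+1}$ in \eqref{eq evolalphadecomp} vanishes for all large $\gamma$, and the analysis reduces to the unconstrained stochastic recursion.

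Next, I would linearize the mean field. Writing $h(\balpha) := \mathbb{E}[\nabla_\balpha S(W^{(\balpha)})]$, convexity/concavity from (A1) together with the characterization of $\balpha^*$ as the unique zero of $h$ (Theorem~\ref{th convergence}) provides a Hessian $H := Dh(\balpha^*)$, which is symmetric positive (resp.\ negative) definite. A Taylor expansion gives $h(\balpha_\gamma) = H(\balpha_\gamma - \balpha^*) + o(\|\balpha_\gamma - \balpha^*\|)$. Substituting into \eqref{eq evolalphadecomp} and rescaling by $\Delta_\gamma = (\balpha_\gamma - \balpha^*)/\sqrt{\rho_\gamma}$ yields, after computing the ratio $\sqrt{\rho_\gamma/\rho_{\gamma+1}} = 1 + O(\rho_{\gamma+1})$ from the explicit form \eqref{eq formedegamman}, a noisy linear recursion of the shape
\begin{equation*}
\Delta_{\gamma+1} = \bigl(I - \rho_{\gamma+1} H + o(\rho_{\gamma+1})\bigr)\Delta_\gamma \;-\; \sqrt{\rho_{\gamma+1}}\,\delta M_{\gamma+1} \;+\; r_{\gamma+1},
\end{equation*}
where $r_{\gamma+1}$ gathers Taylor remainders and a deterministic drift term of order $\sqrt{\rho_{\gamma+1}}\rho_{\gamma+1}$ coming from the rescaling (this latter term is responsible for a nonzero asymptotic mean only in the boundary case $\beta=1$, which is excluded by the hypothesis $\beta<1/2$ being bypassed here).

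The third step is a martingale central limit argument applied to the noise $\sqrt{\rho_{\gamma+1}}\delta M_{\gamma+1}$. Assumption (A4) controls the convergence of the stochastic series, (A6) identifies the asymptotic conditional covariance as $\Sigma$, and (A5) provides the Lindeberg/uniform integrability condition via the $(2+A_1)$-th moment bound on the truncated increments. Combining these, one shows that $\Delta_\gamma$ satisfies a stable linear stochastic recursion with asymptotically Gaussian innovations, whose solution converges in distribution to the centered Gaussian with covariance $\Gamma$ solving the Lyapunov-type equation
\begin{equation*}
\bigl(H - \tfrac{1}{2}\zeta I\bigr)\Gamma + \Gamma\bigl(H - \tfrac{1}{2}\zeta I\bigr)^{\mathsf{t}} = \Sigma,
\end{equation*}
where $\zeta = 0$ when $\beta<1$ (the case considered here); this gives $\Gamma$ linearly in $\Sigma$.

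The main obstacle, and the reason we invoke \citet{lelong2013} rather than the original Fabian theorem, is the proper handling of the Chen projection: one has to argue that the truncation is inactive with probability one in the tail, which requires combining Theorem~\ref{th convergence} (supplying the a.s.\ limit) with (A7) (supplying the geometric margin). Once this is granted, the remaining work is the verification of (A5)--(A6) within our specific framework; concavity of $W^{(\balpha)}$ in $\balpha$ from Proposition~\ref{prop convexite}, together with the regularity of $\Psi$ in Hypothesis~\ref{hypo:regularitepsi} and the polynomial structure of the truncation \eqref{eq:truncation}, readily implies the local moment bounds on $\delta M_\gamma$ needed for (A5).
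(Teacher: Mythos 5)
The paper does not actually prove this theorem: it is quoted verbatim from \citet{lelong2013}, so the only question is whether your sketch faithfully reconstructs that external argument. Its architecture does match Lelong's proof (show via (A7) and the a.s.\ convergence of Theorem~\ref{th convergence} that the Chen reinitializations occur only finitely often, linearize the mean field at \(\balpha^*\), run a Fabian-type martingale CLT on \(\sqrt{\rho_{\gamma+1}}\,\delta M_{\gamma+1}\) using (A4)--(A6), and read the limit covariance off a Lyapunov equation). However, there is a genuine gap at the linearization step: you assert that (A1) together with uniqueness of the zero of \(h(\balpha)=\mathbb{E}[\nabla_\balpha S(W^{(\balpha)})]\) "provides a Hessian \(H=Dh(\balpha^*)\), which is symmetric positive (resp.\ negative) definite." Mere convexity/concavity (note that (A1) here is not even strict) gives neither differentiability of \(h\) at \(\balpha^*\) nor definiteness — only semidefiniteness where the Hessian happens to exist. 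Yet the entire argument after that point (the stable linear recursion for \(\Delta_\gamma\), the solvability of \((H-\tfrac12\zeta I)\Gamma+\Gamma(H-\tfrac12\zeta I)^{\mathsf t}=\Sigma\) with \(\zeta=0\), hence the positive stability of \(H\)) hinges on it. In \citet{lelong2013} this local \(C^1\)-regularity and stability of the mean field at the optimum is a separate standing assumption, not a consequence of convexity; your proof as written would fail for a merely convex, non-smooth or degenerate objective.

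Two smaller points. First, your parenthetical "nonzero asymptotic mean only in the boundary case \(\beta=1\), which is excluded by the hypothesis \(\beta<1/2\) being bypassed here" is garbled: the hypothesis is \(\beta\in(1/2,1)\), and the correct reason the rescaling drift is negligible is that \(\rho_\gamma/\rho_{\gamma+1}=1+\beta/(v_2+\gamma)+o(1/\gamma)\) with \(1/\gamma=o(\rho_{\gamma+1})\) precisely because \(\beta<1\); this is what forces \(\zeta=0\). Second, the closing paragraph about verifying (A5)--(A6) from Proposition~\ref{prop convexite} and Hypothesis~\ref{hypo:regularitepsi} is misplaced inside a proof of this theorem: (A5)--(A6) are hypotheses of the statement, and their verification in the swap model belongs to the discussion around Remark~\ref{rem:rem1}, not to the proof itself.
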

	\begin{rem}\label{rem:rem1}
		As explained in detail in \citet[Sec. 2.4]{lelong2013}, 
		the assumptions of Theorem~\ref{th rateofconvergence} are satisfied as soon 
		as 
		\begin{itemize}
			\item There exists \(A_3>0\) such that  
			\[
			\forall C>0, \quad \mathbb{E}\left[\sup_{|\balpha|\leq
				C}\left|S\left(W^{(\balpha)}\right)\right|^{2+A_3}\right] < \infty.
			\] 
			\item The function \(\balpha \mapsto \mathbb{E}[S(W^{(\balpha})]\)
			is strictly concave or convex.
		\end{itemize}
		The first condition is usually easily satisfied, e.g. when \(S\) has a
			polynomial growth
			at infinity because the moments of the wealth process are typically finite.
			The both properties are e.g. fulfilled by the example studied in
			Sections~\ref{sec numericalwithoutliquidity} and \ref{sec
				numericalwithliquidity} (see the discussions at the beginning of these
			sections). 
	\end{rem}
	\subsection{Performance of the optimal truncated strategy without liquidity
		cost}\label{subsec35}
	
	The numerical error on
	the optimal wealth decreases when the \(\Lambda^{(j)}\)'s tend to
	\(\mathbb{N}^{\ell(j)+1}\).
	In this subsection, we provide a theoretical estimate on the error resulting 
	from the truncation in \eqref{decompositionL2} in the idealistic 
	context of no liquidity cost and general Gaussian affine models
	\citep{dai_singleton_2000}.

	In \citet{dai_singleton_2000}, general Gaussian affine models are introduced 
	for which, for any two times \(s < t\), there exist standard
	independent Gaussian random variables \(G^{(0)},\cdots,G^{(M)}\) and
	real numbers \(\mu, \lambda_0, \cdots,\lambda_M\)
	such that the prices of zero-coupon bonds have the form
	\[
	B(s,t) = \exp\left(-\mu - \lambda_0 G^{(0)} - \cdots - \lambda_M G^{(M)}\right).
	\]
	A control of the error of truncation is given in the following proposition.
	\begin{prop}
		In the above context, if the truncation set defined in \eqref{eq:truncation} is
		\(\Lambda^{(j)}:=\{n_0+\cdots+n_{\ell(j)} \leq d\}\), then 
		\begin{equation}
			\mathbb{E}
			\left(W^{(\balpha^{*})} \right)^2
			\leqslant
			C_0
			\dfrac{C_1^{d + 1}}{(d + 1)!},
		\end{equation}
		where \(C_0\) and \(C_1\) are some positive constants.
	\end{prop}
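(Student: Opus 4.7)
The plan is to upper-bound $\mathbb{E}[(W^{(\balpha^*)})^2]$ by evaluating the objective at one specific admissible sub-optimal strategy, namely the Hermite truncation of the model-free perfect hedging strategy from Section~\ref{sub:rappelsswap}, and then to quantify this error using the log-normal form of bond prices in a Gaussian affine model. The implicit choice is $S(x)=x^2$, so that optimality of $\balpha^*$ directly controls $\mathbb{E}[(W^{(\balpha^*)})^2]$.

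First I would recall that in the absence of liquidity cost the strategy $\bpi^{\text{perf}}$ described in Section~\ref{sub:rappelsswap} yields $W^{(\bpi^{\text{perf}})}\equiv 0$; its only random components are $\pi^{\text{perf}}(T_j,T_{j+1})=1/B(T_j,T_{j+1})$ for $0\leq j\leq N-2$, all other quantities being deterministic constants (hence already in the degree-zero Hermite subspace and, therefore, inside every $\Lambda^{(j)}$). In the Gaussian affine framework each $1/B(T_j,T_{j+1})$ has the form $\exp(\mu+\sum_k\lambda_k G^{(k)})$. Define $\bar\bpi$ by replacing each such random quantity with its truncation onto Hermite products of total degree $\leq d$; the components $\bar\pi(j,N)$ are then forced by self-financing. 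Since $\bar\bpi$ is admissible in the truncated class and $\balpha^*$ is $L^2$-optimal, we have $\mathbb{E}[(W^{(\balpha^*)})^2]\leq \mathbb{E}[(W^{(\bar\balpha)})^2]$.

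Next I would quantify the per-component truncation error using the Hermite generating identity $\sum_n t^n H_n(x)/n!=e^{tx-t^2/2}$: applied to $X=\exp(\mu+\sum_k\lambda_k G^{(k)})$, the $L^2$ norm squared of the tail $X-\bar X$ equals in closed form $e^{2\mu+\sum_k\lambda_k^2}\sum_{n>d}(\sum_k\lambda_k^2)^n/n!$, which is bounded by a constant times $(\sum_k\lambda_k^2)^{d+1}/(d+1)!$ by the standard tail estimate for the exponential series. To propagate this to $W^{(\bar\balpha)}$ I would use that without liquidity cost the self-financing equation \eqref{eq selffinancing} is linear ($\Psi_{j,i}(\pi)=B(T_j,T_i)\pi$), so writing $\delta\pi(j,i):=\bar\pi(j,i)-\pi^{\text{perf}}(j,i)$ and subtracting the relation for the perfect strategy from the one for $\bar\bpi$ yields
\[
B(T_j,T_N)\,\delta\pi(j,N)=\sum_{i<j}\delta\pi(i,j)-\sum_{j<i<N}B(T_j,T_i)\,\delta\pi(j,i),
\]
with only the diagonal terms $\delta\pi(j,j+1)$ nonzero. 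Because $W^{(\bar\balpha)}=\sum_{j<N}\delta\pi(j,N)$, a Minkowski step expresses $\|W^{(\bar\balpha)}\|_{L^2}$ as a finite sum over $j$ of $L^2$ norms of products of a log-normal ratio of bond prices and a Hermite tail, each tail enjoying the factorial decay above.

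The main obstacle is the final aggregation step: the factor $1/B(T_j,T_N)$ in the self-financing identity is itself log-normal and correlated with the Hermite tail $\delta\pi(j,j+1)$, so one cannot factor $\mathbb{E}[Y^2(\delta\pi)^2]$ trivially. Two complementary routes are available: either a Cameron--Martin shift that absorbs the log-normal factor into a shifted Gaussian measure (at the cost of re-expressing the truncation in the shifted basis, for which the generating-function bound still applies with slightly modified constants), or Cauchy--Schwarz at the $L^4$ level combined with hypercontractivity for Hermite chaoses to control higher moments of the tails while all exponential moments of bond prices are finite in the Gaussian affine model. In either case, summing the $O(N)$ contributions produces the announced bound $C_0\,C_1^{d+1}/(d+1)!$, with $C_1$ inherited from the largest $\sum_k\lambda_k^2$ over the maturities involved and $C_0$ absorbing the bond-price moments, the number of tenors, and the multiplicative constants from the shift or hypercontractivity argument.
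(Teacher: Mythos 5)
Your proposal follows essentially the same route as the paper: you bound the truncated optimum by the Hermite projection of the perfect hedging strategy and control the per-component truncation error of the log-normal bond quantities via the Hermite generating-function identity, which is exactly the content of Lemma~\ref{lemma erreur troncature} (and its appendix proof) combined with \eqref{eqn_final_wealth}. You are in fact more explicit than the paper --- which simply calls the conclusion a ``straightforward consequence'' --- about propagating the error through the linear self-financing relation and handling the correlated $1/B(T_j,T_N)$ factors, and your hypercontractivity (or shift) argument is a legitimate way to complete that aggregation while only inflating the constants $C_0$ and $C_1$.
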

	The proposition is a straightforward consequence of 
	\eqref{eqn_final_wealth} and the next lemma applied to \(X=\bpi(\balpha^*)\).
	This lemma also 
	allows one to precise the values of \(C_0\) and \(C_1\).
	\begin{lemma}\label{lemma erreur troncature}
		Consider the random variable 
		\[
		X := \exp\left(\mu + \sum_{m=0}^{M}\lambda_mG^{(m)}\right),
		\]
		where \(\mu, \lambda_0, \cdots,\lambda_M\) are real numbers, and \(G^{(0)},
		\cdots,G^{(M)}\)
		are independent standard Gaussian random variables. Consider the projection 
		\(X^{d}\) of
		\(X\) on the subspace of \(L^2(G^{(0)}, \cdots,G^{(M)})\) 
		generated by
		\[
		\left(\prod_{m=0}^M \dfrac{H_{n_m}(G^{(m)})}{\sqrt{n_m !}}, \quad n_0  + \cdots
		+ n_M  \leq d\right).
		\]
		We have
		\begin{equation}
			\label{eq distance entre pistar et proj}
			\|X - X^{d}  \|_2
			\leqslant 
			\exp\left(\mu +\lambda_0^2+\cdots\lambda_M^2\right)
			\dfrac{\left(\lambda_0^2 + \cdots +
				\lambda_{M}^2\right)^{\tfrac{d+1}{2}}}{\sqrt{(d+1)!}}.
		\end{equation}
	\end{lemma}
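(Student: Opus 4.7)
The plan is to expand $X$ explicitly in the Hermite orthonormal basis via the Hermite generating function, identify the coefficients of the projection $X^d$, and then estimate the tail of a multivariate series that collapses nicely thanks to the multinomial theorem.

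First, I would use the classical generating function identity $e^{tx - t^2/2} = \sum_{n \geq 0} \frac{t^n}{n!} H_n(x)$, which rearranges to $e^{\lambda g} = e^{\lambda^2/2} \sum_{n \geq 0} \frac{\lambda^n}{n!} H_n(g)$. Applied to each independent coordinate $G^{(m)}$ and multiplied out, this yields
\begin{equation*}
X = \exp\Bigl(\mu + \tfrac{1}{2}\sum_{m=0}^M \lambda_m^2\Bigr) \sum_{\bfn \in \mathbb{N}^{M+1}} \Bigl(\prod_{m=0}^M \frac{\lambda_m^{n_m}}{\sqrt{n_m!}}\Bigr) \prod_{m=0}^M \frac{H_{n_m}(G^{(m)})}{\sqrt{n_m!}}.
\end{equation*}
Since $\prod_m H_{n_m}(G^{(m)})/\sqrt{n_m!}$ forms an orthonormal basis of $L^2(G^{(0)},\dots,G^{(M)})$, this expansion reveals the Fourier coefficients of $X$ in that basis, and $X^d$ is obtained by keeping only the indices with $n_0 + \cdots + n_M \leq d$.

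Next, I would compute the squared $L^2$ error by summing the squared coefficients over the complementary set of indices. Setting $\Lambda := \lambda_0^2 + \cdots + \lambda_M^2$ and using the multinomial identity $\sum_{|\bfn|=k} \prod_m \lambda_m^{2 n_m}/n_m! = \Lambda^k/k!$, this gives
\begin{equation*}
\|X - X^d\|_2^2 = \exp\bigl(2\mu + \Lambda\bigr) \sum_{k \geq d+1} \frac{\Lambda^k}{k!}.
\end{equation*}

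Finally, the tail of the exponential series is bounded by $\frac{\Lambda^{d+1}}{(d+1)!} \sum_{j \geq 0} \frac{\Lambda^j}{j!} = \frac{\Lambda^{d+1}}{(d+1)!} e^{\Lambda}$, which combined with the prefactor and taking a square root produces the stated inequality. The only mildly delicate step is keeping track of the $\sqrt{n_m!}$ normalization when passing from the coefficients of the generating function to the orthonormal basis coefficients, but no estimate beyond the multinomial theorem and a crude bound on the exponential tail is needed.
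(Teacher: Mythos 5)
Your proposal is correct and follows essentially the same route as the paper's proof: the Hermite generating function gives the Fourier coefficients, the multinomial identity collapses the multi-index tail to $\sum_{k\geq d+1}\Lambda^k/k!$, and the same crude bound $\Lambda^{d+1}e^{\Lambda}/(d+1)!$ finishes the estimate. The only cosmetic difference is that you treat general $M$ directly, whereas the paper first works out the case $M=0$ and then generalizes.
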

	We postpone the proof of this lemma to the Appendix.

	\section{Numerical validation of the optimization procedure: an example without
		liquidity cost}\label{sec numericalwithoutliquidity}
	In this section we study the accuracy of our algorithm in
	the no liquidity cost case where a perfect replication strategy is known
	(see Section~\ref{sub:rappelsswap}). We minimize the quadratic risk measure
		of the hedging error.
	
	The bond market model 
	is the Vasicek model which is the simplest Gaussian model:
	\begin{equation}\label{eq shortrate}
		dR_\theta = A(r_\infty-R_\theta) d\theta + \sigma dB_\theta,
	\end{equation}
	where 
	\(A\) is the mean reverting rate,  
	\(r_\infty\) is the mean of the equilibrium measure,
	\(\sigma\) is the volatility and
	\((B_\theta,\theta\geq 0)\)  is a one-dimensional Brownian motion.
	
	Notice that
	\begin{equation}\label{eq:incrementgauss}
		\forall u < v,\quad R_v = r_\infty + (R_u - r_\infty)e^{-A(v-u)} + \sigma
		e^{-A(v-u)}
		\int_u^v e^{A(\theta-u)} dB_\theta.
	\end{equation}
	Therefore, there exists an \textit{i.i.d.} sequence \((G^{(0)}, G^{(1)}, \cdots,
	G^{(N)})\)
	of \(\mathcal{N}(0,1)\)
	Gaussian random variables such that 
	\begin{equation}\label{eq:tauxinteretsfngaussienne}
		\forall k=0,\cdots, N, \quad R_{T_k} = \Phi(T_k-T_{k-1},R_{T_{k-1}},G^{(k)}),
	\end{equation}
	where
	\[
	\Phi(\eta,r,G) = r_\infty + 
	e^{-A\eta}
	(r - r_\infty) + G
	\sigma \sqrt{\dfrac{1 - 
			e^{-2A\eta}
		}{2A}}.
	\]
	In our numerical experiments, we have chosen the 
	following typical values of the parameters 
	\(A=10\%\), \(r_\infty = 5\%\), \(\sigma = 5\%\).
	With this choice of parameters,
	the mean yearly interest zero-coupon rates with maturity less than 10 years 
	take values between 
	\(3\%\) and \(5\%\).
	
	Our numerical study concerns the minimization of
	the quadratic mean hedging error which corresponds to the choice
	\(S(x) = x^2\) in \eqref{eq:optimizationparametric}. This choice penalizes
	gains and losses in a symmetric way and aims to construct a strategy as close as
	possible to the
	exact replication strategy.
	
	In the no liquidity cost case, the terminal wealth \(W^{(\balpha)}\) is
	a linear function of the parameter \(\balpha\)
	and therefore assumption (A1) of Theorem~\ref{th convergence}
	is obviously satisfied.
	
	Given a degree of truncation \(d\), the set \(\Lambda^{(j)}_d\) is chosen as
	\begin{equation}
		\label{eq ensemble de troncation}
		\Lambda^{(j)}_d :=
		\{\bfn_{(j)}=(n_0,\cdots,n_j)\in\mathbb{N}^{j+1},n_0+\cdots+n_j\leq d\}.
	\end{equation}
	We have to optimize the real-valued parameters
	\(\alphanjji\)
	for \(j<i\) and \(\bfn_{(j)}\in\Lambda^{(j)}_d\).
	The quantities of zero-coupon bonds to exchange
	are given by \eqref{eq:truncation}.
	The choice of the sequence \((\rho_\gamma, \gamma\geq 1)\) 
	in \eqref{eq evolution de alpha} is crucial. Choose $\rho_\gamma$
	as in~(\ref{eq formedegamman}).
	We discuss the sensitivity of the method to the parameters $v_1$, $v_2$, $\beta$
	in Section~\ref{SectionSequence}.
	We also discuss the sensitivity of the results 
	to the number \(\Gamma\) of steps.
	
	In all the sequel, we use the following notation.
	
	\noindent\textbf{Notation}
	For all vector \(\balpha = (\alphanjji, -1\leq j<i\leq N-1)\), we set 
	\begin{equation}
		\label{eq:deffonctionvaleur}
		v(\balpha):=\mathbb{E}\left[W^{(\balpha)}\right]^2,
	\end{equation}
	where the expectation is computed only with respect to the Gaussian 
	distribution \((G^{(0)},\cdots,G^{(\ell(N))})\).
	\subsection{Empirical study of the truncation errors (Step~1)}
	\label{SectionTruncationSpace}
	
	In this subsection we develop an empirical validation of the projection step
	presented in Section \ref{subsection:finitiedimensionalprojection}
	
	We observe that the quadratic mean hedging error decreases very fast to \(0\)
	when the
	degree of  truncation increases.
	For a notional equal to \(1\), the error 
	is of the order 
	of one basis point 
	(a hundredth of percent) for a degree \(d=3\) and a small number of dates \(N\),
	and for \(d=4\) and
	for larger values of \(N\).
	
	Figure~\ref{truncationPoly} below shows 
	\(v(\balpha^{*,d})\), with the optimal parameter \(\balpha^{*,d}\)
	corresponding to the truncation set \eqref{eq ensemble de troncation}.
	We have used the explicitly known 
	finite dimensional projections of the optimal strategies
	without liquidity cost to obtain \(\balpha^{*,d}\), and a Monte Carlo procedure
	to
	compute \(v\).
	Table~\ref{tabletroncature} shows some  values 
	used to plot Figure~\ref{truncationPoly}.

	\begin{figure}[ht]
		\begin{center}
			\includegraphics[width=0.7\textwidth,angle=-90]{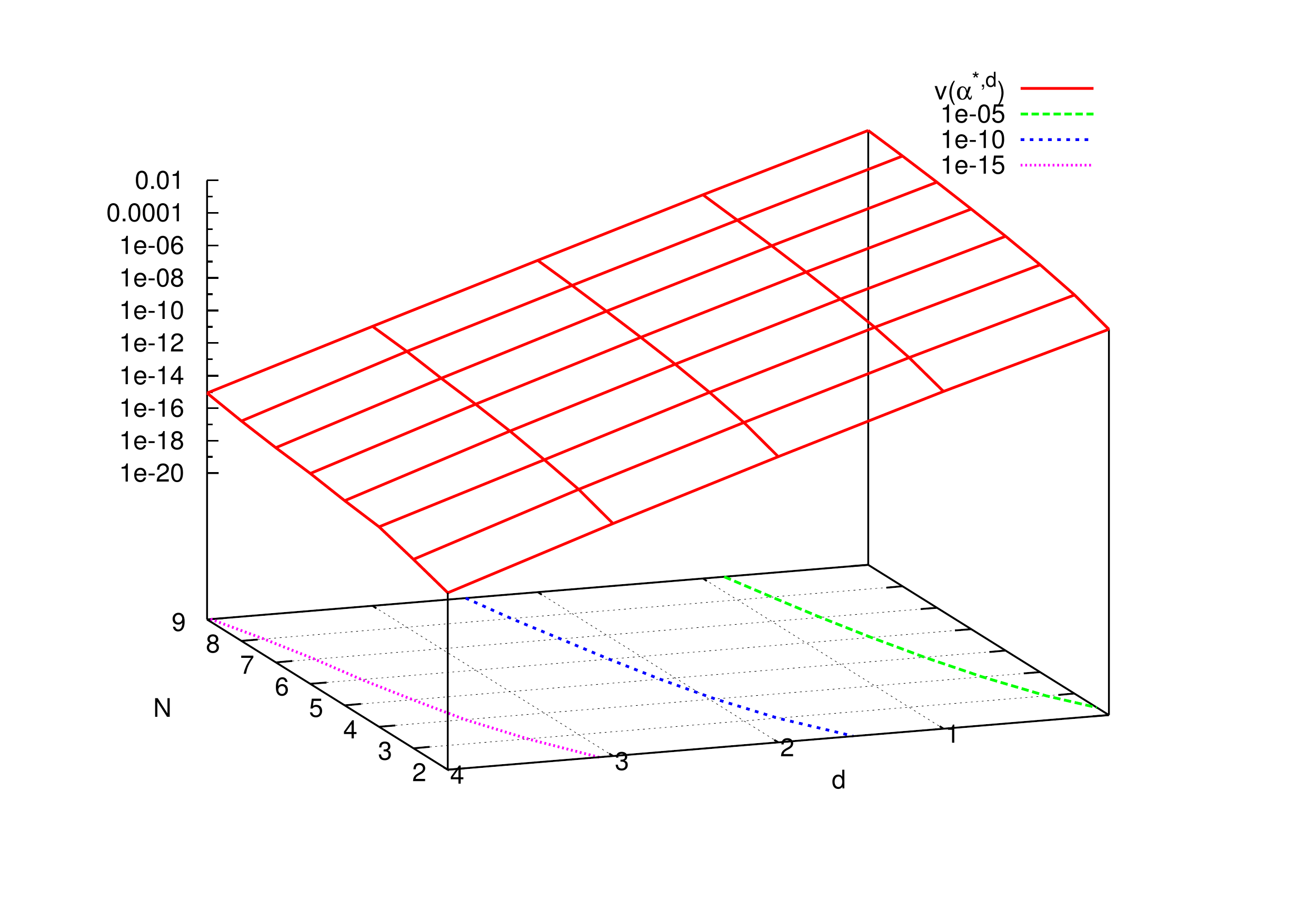}
		\end{center}
		\caption{
			\label{truncationPoly}}
	\end{figure}
	
	\begin{table}[ht]\footnotesize
		\centering
		\begin{tabular}{|c|c|c|}
			\hline
			degree d & N=2 & N=3 \\
			\hline
			0 & 5.2 E-6 & 3.0 E-5 \\
			\hline
			1 & 5.4 E-9 & 3.1 E-8\\
			\hline
			2 & 3.7 E-12 & 2.0 E-11 \\
			\hline
			3 & 1.9 E-15 & 1.9 E-14\\
			\hline
			4 &2.2 E-18 & 3.9 E-15 \\
			\hline
		\end{tabular}
		\caption{\(v(\balpha^{*,d})\)\label{tabletroncature} }
	\end{table}

	\subsection{Empirical study of the  optimization step (Step 2)}
	The stochastic algorithm converges almost surely
	to the optimal coefficient \(\balpha^*\). In this part, we 
	empirically study  the convergence rate in terms of the number of 
	steps \(\Gamma\) and the choice of the sequence \((\rho_\gamma)\).

	\subsubsection{A typical evolution of \protect{$(\alpha_\gamma)$}}
	In this subsection, we consider a swap with two payment dates (\(N=2\)).
	We consider the truncation set \(\Lambda^{(0)} = \{0,1\}\).
	The objective is to approximate 
	\(
	\balpha^*=
	(\alpha^{0,*}(-1,0),\alpha^{0,*}(-1,1), \alpha^{0,*}(0,1),\alpha^{1,*}(0,1)).
	\)
	In Figure~\ref{fig evolution des parametres},
	the four parameters 
	\(\balpha=(
	\alpha^{0}(-1,0),\alpha^{0}(-1,1),\alpha^{0}(0,1),\alpha^{1}(0,1)
	)\)
	evolve according to \eqref{eq evolution de alpha}
	where the sequence \((\rho_\gamma,\gamma\geq 1)\) is defined by
	\eqref{eq formedegamman} with \(v_1 = 10^7\), \(v_2=1\) and
	\(\beta = 1\).
	
	As expected, the sequence \((\balpha_\gamma)\) converges to \(\balpha^*\). 
	However, the evolution is 
	quite \textit{slow} although we have 
	empirically chosen the parameters \(v_1\), \(v_2\) and \(\beta\)  
	in a favorable way.
	
	In Figure~\ref{figOptPath}, we plot (in violet) 
	\(v( \alpha^{0}(-1,0),\alpha^{0}(-1,1),\alpha^{0,*}(0,1),\alpha^{1,*}(0,1))\)
	as a function of \(\alpha^{0}(-1,0)\) and \(\alpha^{0}(-1,1)\). 
	We also plot in green the path \((v(\balpha_\gamma),0\leq \gamma \leq \Gamma)\).
	The figure shows that after \(\Gamma=10000\) steps the hedging error is small 
	though the optimal parameters have not been approximated accurately 
	(notice that the violet surface is flat).
	\begin{figure}[ht]
		\begin{center}
			\includegraphics[width=0.7\textwidth]{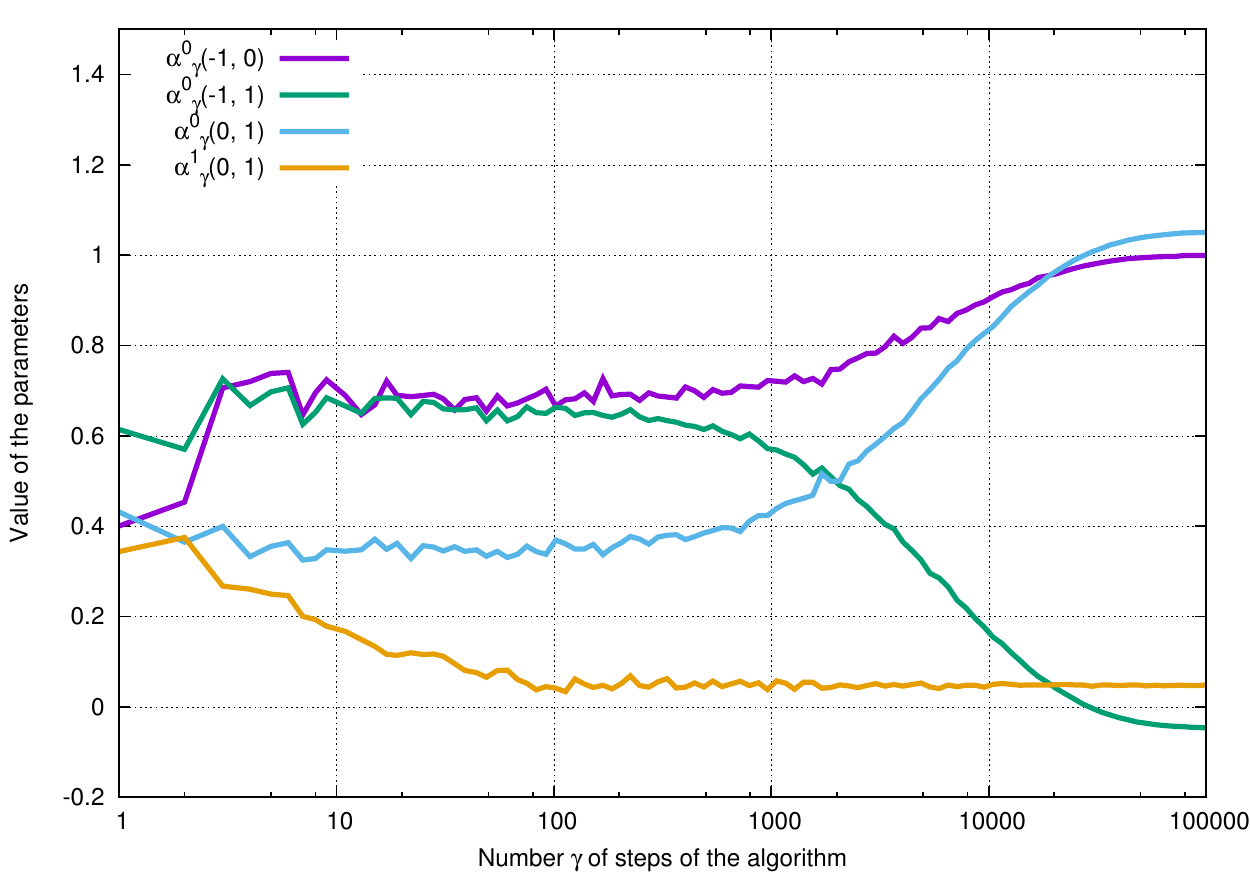}
		\end{center}
		\caption{Evolution of the parameters \(\alpha^0_\gamma(-1,0)\), 
			\(\alpha^0_\gamma(-1,1)\),
			\(\alpha^0_\gamma(0,1)\),
			\(\alpha^1_\gamma(0,1)\)
			in terms of \(\gamma\). }\label{fig evolution des parametres}
	\end{figure}

	\begin{figure}[ht]
		\begin{center}
			\includegraphics[angle=-90,
			width=0.9\textwidth]{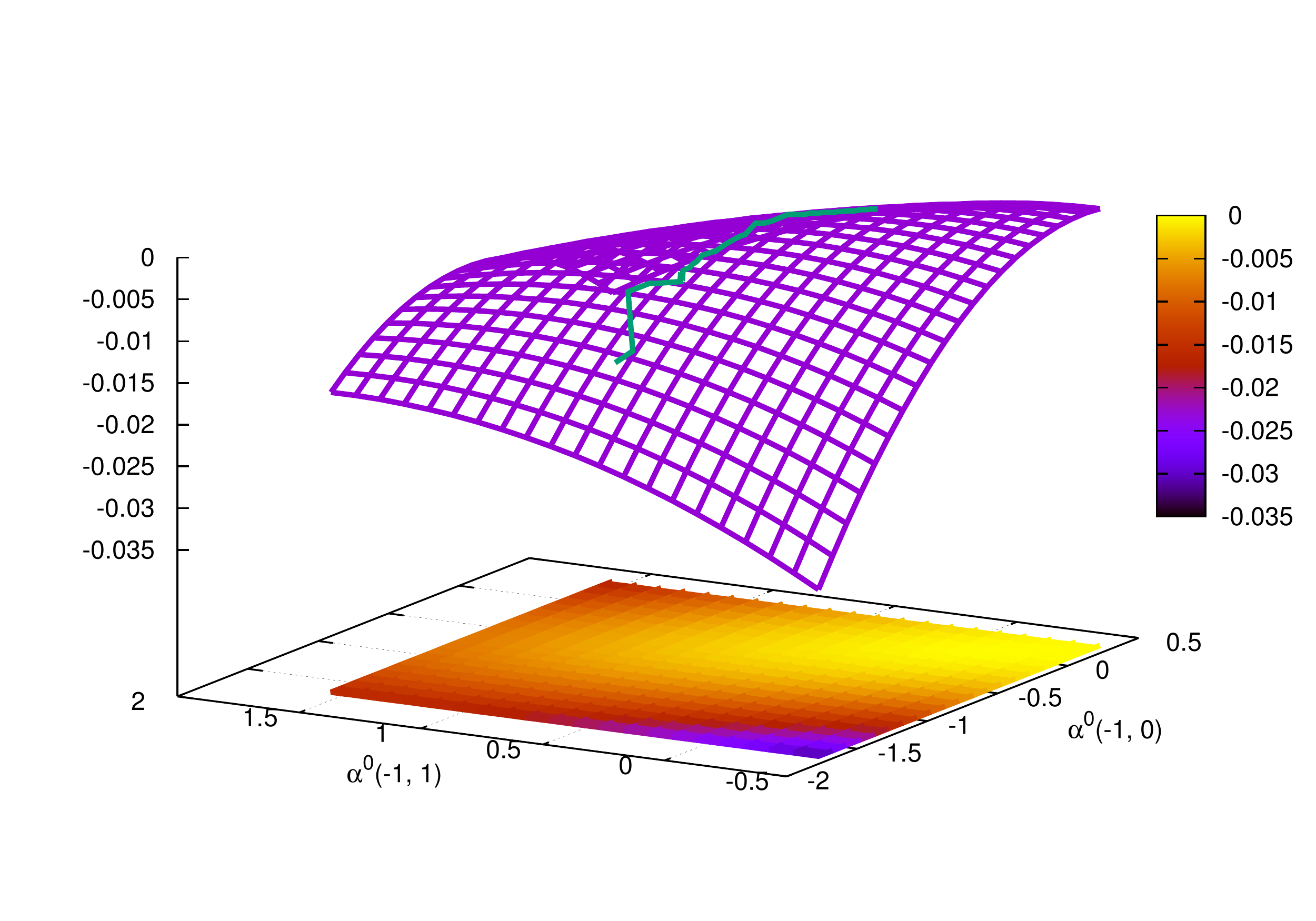}
		\end{center}
		\caption{
			Violet surface: \( -
			v(\centerdot,\centerdot,\alpha^{0,*}(0,1),\alpha^{1,*}(0,1))\).
			Green curve: evolution of 
			\((\alpha^0\gamma(-1,0), 
			\alpha^0_\gamma(-1,1),-
			v(\alpha^{0}_\gamma(-1,0),\alpha^{0}_\gamma(-1,1),\alpha^{0,*}(0,1),\alpha^{1,*}(0,1))\)
			in terms of \(\gamma=1,\cdots,10000\)
			(\(v_1=10\), \(v_2=1\) and \(\beta=0.6\)).}\label{figOptPath}
	\end{figure}

	\clearpage
	
	\subsubsection{Sensitivity to the choice of the sequence
		\protect{$(\rho_\gamma)$}}
	\label{SectionSequence}
	
	Theorem~\ref{th convergence} states the convergence of
	the optimization method for all sequence \((\rho_\gamma)\)
	satisfying (A2). We here study 
	the sensitivity of the results to the parameters 
	\(v_1\), \(v_2\), \(\beta\) of
	sequences of type \eqref{eq formedegamman} and
	to the  total number of steps \(\Gamma\).
	
	Tables~\ref{table fonctionv1} and \ref{table fonctionv1b}
	show the expected value function 
	obtained after \(\Gamma=10E4\), \(10E5\) and \(10E6\) steps. The 
	expected value function is 
	estimated by means of 
	a classical Monte Carlo procedure.
	Table~\ref{table pas constant} shows the same results 
	with a sequence \(\rho_\gamma = v_1\)
	which does not satisfy condition (A2).

	\begin{table}[ht]\footnotesize
		\begin{center}
			\begin{tabular}{|c||c|c|c|c|c|c|c|c|}
				\hline
				\backslashbox{\(\Gamma\)}{\(v_1\)} & 1 & 10 & 100 & 1000 & 10 000 &  20 000 &
				10E5 & 10E6\\
				\hline
				10E4 & 7.2 E-4 & \color{red}{2.6 E-6} & \color{red}{1.1 E-6} & \color{red}{8.9
					E-7} & 5.1 E-4 &  8.3 E-4 & 1.0 E-1 & 1.3 E-1 \\
				\hline  
				10E5 &1.4 E-5 & 9.3 E-7 & 8.9 E-7 & 4.7 E-7 & \color{red}{2.3 E-8} 
				& 1.7 E-2 & 9.1 E+3 & 9.5 E+5\\
				\hline
				10E6 & 3.8 E-6 & 9.2 E-7 & 7.7 E-7 & 1.5 E-7 & \color{red}{6.6 E-12}  &
				\color{red}{5.4 E-12} & 1.0 E-5 & 15.8\\
				\hline
			\end{tabular}
		\end{center}
		\caption{\(v(\balpha_\Gamma)\) (\(v_2 = 1000\), \(\mathbf{\beta = 0.6}\)) }
		\label{table fonctionv1}\normalsize
	\end{table}

	\begin{table}[ht]\footnotesize
		\begin{center}
			\begin{tabular}{|c||c|c|c|c|c|c|c|}
				\hline
				\backslashbox{\(\Gamma\)}{\(v_1\)} & 1 & 10 & 100 & 1000 & 10 000 & 13 000 & 2E4
				\\
				\hline
				10E4  & 8.8 E-3 & 1.0 E-3 & 7.8 E-6 & \color{red}{9.8 E-7} & \color{red}{8.8
					E-7} & \color{red}{1.7 E-6} & \color{red}{9.7 E-7} \\
				\hline
				10E5  & 6.6 E-3  & 1.1 E-4 & \color{red}{9.3 E-7} & \color{red}{9.0 E-7} &
				\color{red}{6.8 E-7} & \color{red}{5.5 E-7} & \color{red}{4.8 E-7}\\
				\hline
				10E6  & 4.5 E-3 & 1.8 E-5 & 9.3 E-7 & 8.7 E-7 & 5.1 E-7 & 3.6 E-7 & 2.9 E-7 \\
				\hline  
			\end{tabular}
			\begin{tabular}{|c||c|c|c|c|c|c|c|}
				\hline
				\backslashbox{\(\Gamma\)}{\(v_1\)} &  1E5 & 5E5 & 1E6 & 2E6 & 3E6 & 4E6 & 5E6\\
				\hline
				10E4  & 1.3 E-3 & 3.3 E-2 & 7.3 E-1 & 4.5 E+1 & 1.1 E-2 & 6.5 E+2 & 6.9 E+1 \\
				\hline
				10E5  & \color{red}{6.6 E-7} & 1.2 E-5 & 4.1 E-4 & 2.7 E-1 & 8.1 E-3 & 1.7 E-2 &
				1.4 E+1\\
				\hline
				10E6  & 6.8 E-8 & 1.1 E-10 & \color{red}{5.3 E-12} & \color{red}{7.4 E-12} &
				\color{red}{7.0 E-12} & 2.6 E-5 & 1.5 E-6\\
				\hline
			\end{tabular}
		\end{center}
		\caption{\(v(\balpha_\Gamma)\) (\(v_2 = 1000\), \(\mathbf{\beta = 0.9}\)) }
		\label{table fonctionv1b}\normalsize
	\end{table}
	
	\begin{table}[ht]\footnotesize
		\begin{center}
			\begin{tabular}{|c|c|c|c|c|c|c|c|c|c|}
				\hline
				\backslashbox{\(\Gamma\)}{\(v_1\)} & 1 &  2 & 4 & 6 & 8 & 10 & 12 & 20\\
				\hline
				10E4  & 7.6 E-7 & 6.9 E-7 & 2.5 E-7 & 2.9 E-7 &
				1.4 E-6 & 1.9 E-7 & 7.6 E-7 & 6.1 E-6\\
				\hline
				10E5  & 6.9 E-7 & 6.8 E-7 & 4.1 E-7 & \color{red}{1.3 E-7} & 
				\color{red}{2.9 E-7} 
				& 3.2 E-7 & 3.9 E-6 & 3.2 E-4 \\
				\hline
				10E6  & 3.0 E-8 & 1.0 E-9 & \color{red}{5.1 E-12} & \color{red}{4.3 E-12} &
				\color{red}{5.1 E-12} & 
				\color{red}{4.2 E-12} & \color{red}{5.9 E-12} & 7.8 E-6\\
				\hline     
			\end{tabular}
		\end{center}
		\caption{\(v(\balpha_\Gamma)\) for a constant sequence \(\rho_\gamma = v_1\)}
		\label{table pas constant}\normalsize
	\end{table}

	We observe that the efficiency of the algorithm depends
	on the choice of the parameters 
	\(v_1\), \(v_2\), \(\beta\) and is really
	sensitive to it when the total number of steps \(\Gamma\) 
	is small.
	
	When \(\Gamma\) becomes large (e.g \(\Gamma=10E6\)), then the algorithm
	may seem to diverge if \(\beta\) is chosen 
	carelessly.
	In fact,
	as the sequence \((\rho_\gamma,\gamma\geq 1)\) satisfies hypothesis (A2) of
	Theorem~\ref{th convergence},
	the algorithm converges to the optimal parameters but it is  far from
	\(\balpha^*\) after \(10E6\) steps.
	However, for each value \(\beta\), some \(v_1\)
	reduces the mean square hedging error  to 5 E-12.

	\section{An empirical study of the bid-ask spread costs impact}
	\label{sec numericalwithliquidity}
	We here present numerical results corresponding to two piecewise linear
	liquidity cost functions \(\Psi\): 
	\begin{align}
		\Psi^1_\lambda(T,U,\pi) &= (1 + \lambda  \,\text{sign}(\pi))B(T,U)\pi \label{eq
			liquidity 1}\\
		\Psi^2_{\lambda,C}(T,U,\pi) &= 
		\begin{cases} B(T,U)\pi \mbox{ for } |\pi|\leq C\\
			B(T,U)(C + (1+\lambda)(\pi - C) \mbox{ for } \pi > C\\
			B(T,U)(- C + (1-\lambda)(\pi + C) \mbox{ for } \pi < - C.
		\end{cases}
		\label{eq liquidity 2}
	\end{align}
	
	Despite the fact that we know there is no perfect hedging strategy
	in this context, we suppose the holder receives a null cash at time 
	\(t\) (which is the price of the swap in a no liquidity cost market).

		We now shortly check the convergence of the algorithm. Given piecewise linear
		cost functions \(\Psi\), 
		it is easy to prove that the terminal wealth \(W^{(\balpha)}\) is 
		piecewise linear in \(\balpha\) (see the proof of proposition~\ref{prop
			convexite}).
		Therefore, assumption (A1) of Theorem~\ref{th convergence}.
		However, 
		Theorem~\ref{th convergence} 
		does not apply
		to our context since \(\Psi^1_\lambda\) and \(\Psi^2_\lambda\)
		are piecewise linear and therefore are not
		continuously differentiable everywhere.
		Replace  \(\Psi^1_\lambda\) and \(\Psi^2_\lambda\) 
		by smooth approximations obtained by convolutions with kernels of the type
		\(1/\sqrt{2\pi\varepsilon}\exp(-x^2/(2\varepsilon))\), 
		\(\varepsilon\) small.
		Let \(\balpha^*_\varepsilon\) be the unique optimal parameter corresponding to 
		the new cost functions (existence and uniqueness
		of \(\balpha^*_\varepsilon\) are provided by Theorem~\ref{th convergence}).
		In view of \citet[Th~7.33]{Rockafellar_Tyrrell98}
		\((\balpha^*_\varepsilon)\) tends to \(\balpha^*\) when \(\varepsilon\)
		tends to \(0\).
		
		The preceding consideration to prove convergence is more theoretical than
		practical: in practice,
		the numerical results do not differ when  \(\varepsilon\) is small or 
		\(\varepsilon\) is null.
		
		Finaly, notice that the piecewise linearity in \(\balpha\) of \(W^{(\balpha)}\)
		implies that conditions in Remark~\ref{rem:rem1}
		are satisfied, so that Theorem~\ref{th rateofconvergence} precises the
		convergence rate.
	
	\subsection{Taking liquidity costs into account is really
		necessary}\label{sub:50}
	Consider two different strategies:
	(i) the strategy corresponding to the optimal parameters \(\balpha_0\) in the
	idealistic model without liquidity costs  and
	(ii) the null strategy \(\boldsymbol{\delta}_0\) defined as
	\[
	\boldsymbol{\delta}_0^{\bfn_{(j)}}(j,i)=0, \quad \mbox{ for all }-1 \leq j < i
	\leq N-1
	\mbox{ and } \bfn_{(j)}\in\Lambda^{(j)}.
	\]
	To satisfy the self-financing assumption \eqref{eq selffinancing}, at time
	\(T_j\) the payoff 
	\(P(j)\) of the swap \eqref{eq payoff of the swap} is used to buy  zero-coupon
	bonds with maturity
	\(T_N\). 
	
	Figure~\ref{fig performance avant optimisation} shows
	\(-v(\balpha_0)\) and \(-v(\boldsymbol{\delta}_0)\)
	in terms of the parameter \(\lambda\)
	where the cost function \(\Psi\) is as in \eqref{eq liquidity 1}.
	The mean square hedging error dramatically increases when, in the presence of
	liquidity costs,  
	the trader uses the strategy which is optimal in the no liquidity cost context.
	When the liquidity cost \(\lambda\) is larger
	than \(4\%\), it is even worse to use this strategy than to use the
	\(\boldsymbol{\delta}_0\) strategy!
	\begin{figure}[ht]
		\begin{center}
			\includegraphics[angle=-90,width=0.9\textwidth]{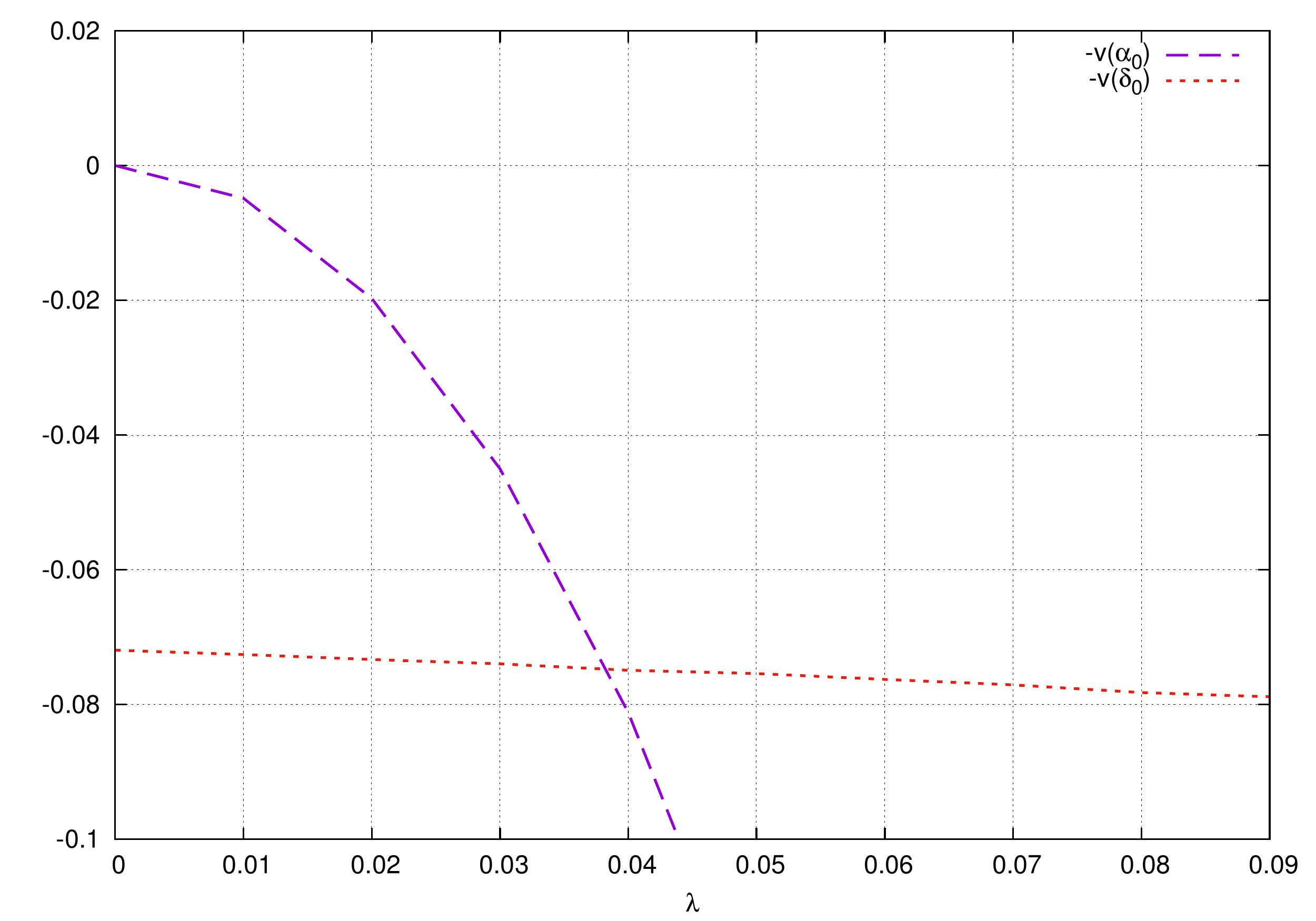}
			\caption{(short dashes) \(-v(\boldsymbol{\delta}_0)\)
				(long dashes) \(-v(\balpha_0)\)
				in terms  of the liquidity cost \(\lambda\) 
			}
			\label{fig performance avant optimisation}
		\end{center}
	\end{figure}
	
	\subsection{Probability distribution of the hedging error in the case~\eqref{eq
			liquidity 1}}
	In this section, the liquidity cost function \(\Psi\) is chosen as in
	\eqref{eq liquidity 1}. 
	
	After \(\Gamma\) steps of the stochastic optimization procedure with a sample
	\(\omega\)
	of the Gaussian vector
	\(((G^{(0)}_\gamma,\cdots,G^{(N)}_\gamma),\gamma=1,\cdots,\Gamma)\), 
	one obtains a random approximation \(\balpha_\Gamma(\omega)\)
	of the optimal parameter \(\balpha^*\).
	
	Figure~\ref{figlesdensites} shows the probability distribution of 
	the random variable \(v(\balpha_\Gamma(\omega))\) for \(\Gamma= 10000\)
	and Table~\ref{tabletroncature2} shows its mean and standard deviation
	for different values of \(\lambda\).
	\begin{table}[h]\footnotesize
		\begin{center}
			\begin{tabular}{|c|c|c|c|c|c|c|c|c|c|c|}
				\hline
				$\lambda$ & 0 & 0.01 & 0.02 & 0.03 & 0.04 & 0.05 & 0.06 & 0.07 & 0.08 &
				0.09 \\
				\hline
				\mbox{Mean}& 2.5E-32 & 0.0031 & 0.012 & 0.024 & 0.038 & 0.049 & 0.058 &
				0.065 & 0.11 & 0.12 \\
				\hline
				\mbox{Std dev.} & 6.7E-33 & 4.2E-5 & 2.8E-4 & 1.7E-3 & 0.016 & 0.017 &
				0.046 & 0.081 & 1.1 & 1.3 \\
				\hline
			\end{tabular}
		\end{center}
		\caption{
			Empirical mean and standard deviation of \(v(\balpha_{10000})\).}
		\label{tabletroncature2}
	\end{table}
	\normalsize 
	\begin{figure}[ht]
		\begin{center}
			\includegraphics[height=8cm,
			angle=-90]{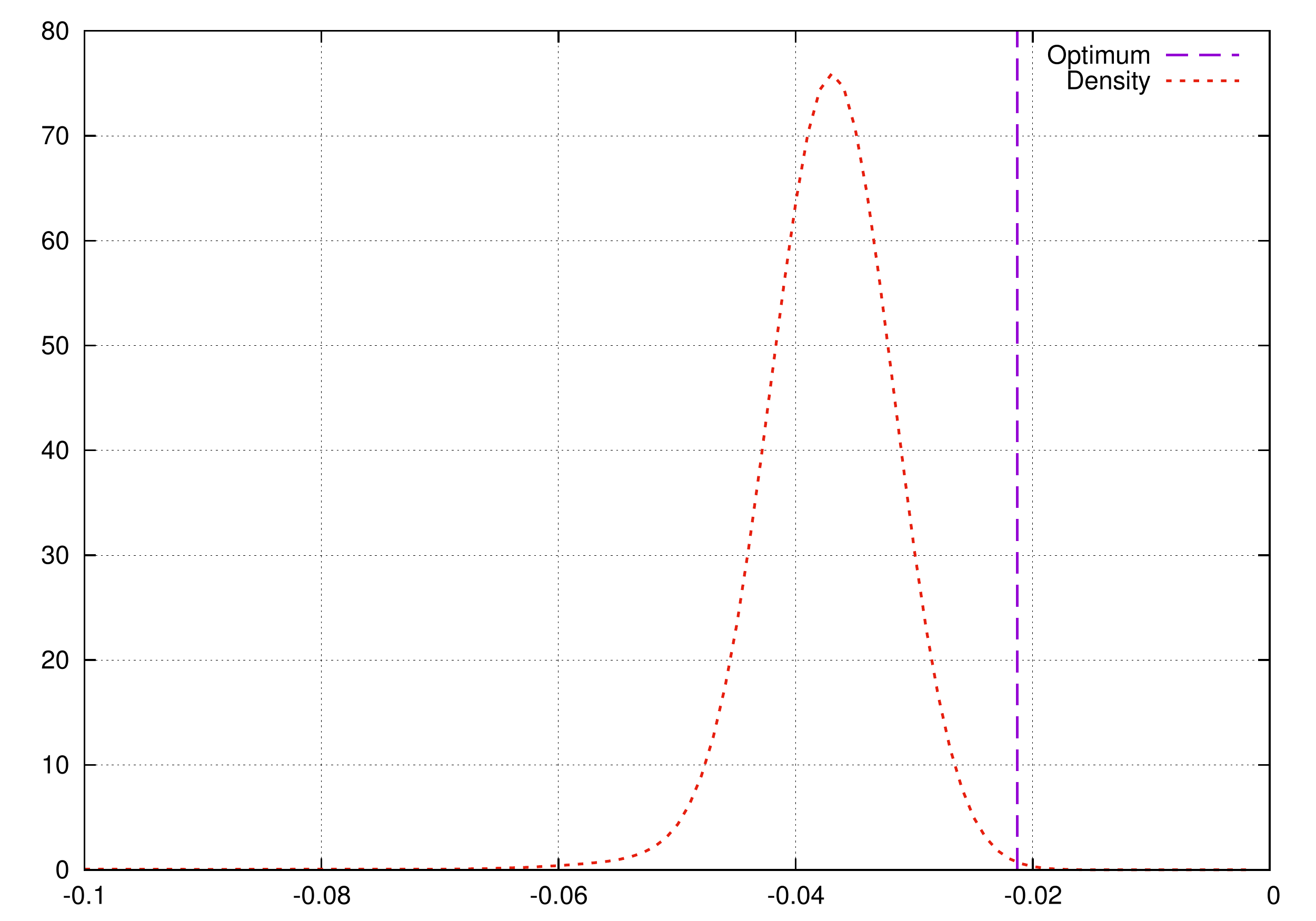}
		\end{center}
		\caption{
			Empirical density of \(-v(\balpha_\Gamma(\omega))\) (\(v_1 = v_2 = \beta =
			1\), \(\Gamma = 10000\), \(\lambda=0.04\)).
			The vertical line corresponds to the \(v(\balpha^*)\).}
		\label{figlesdensites}
	\end{figure}

	\subsection{Hedging error in the case \eqref{eq liquidity 2}}
	In this section, the liquidity cost function \(\Psi\) is chosen as in 
	\eqref{eq liquidity 2}.
	
	Figure~\ref{fig:compare_indicator_no_zoom} shows 
	\(-v(\balpha_\Gamma)\) for \(\Gamma=0\) (green) and 
	\(\Gamma=10E6\)
	(violet). 
	The initial parameter \(\balpha_0\) is the optimal one for a market without
	liquidity cost.
	
	We observe that the main
	part of the loss is saved thanks to the optimization procedure.
	In Figure~\ref{fig:compare_indicator_zoom}, we zoom on the 
	surface resulting from the optimization procedure.
	Notice that the cost function \eqref{eq liquidity 1} is equal 
		to the cost function \eqref{eq liquidity 2} in the particular case \(C=0\).
		So, Figure~\ref{fig:compare_indicator_zoom} allows to compare the value
		functions
		corresponding to these two cost functions.

	\begin{figure}[ht]
		\begin{center}
			\includegraphics[width=10cm,angle=-90]{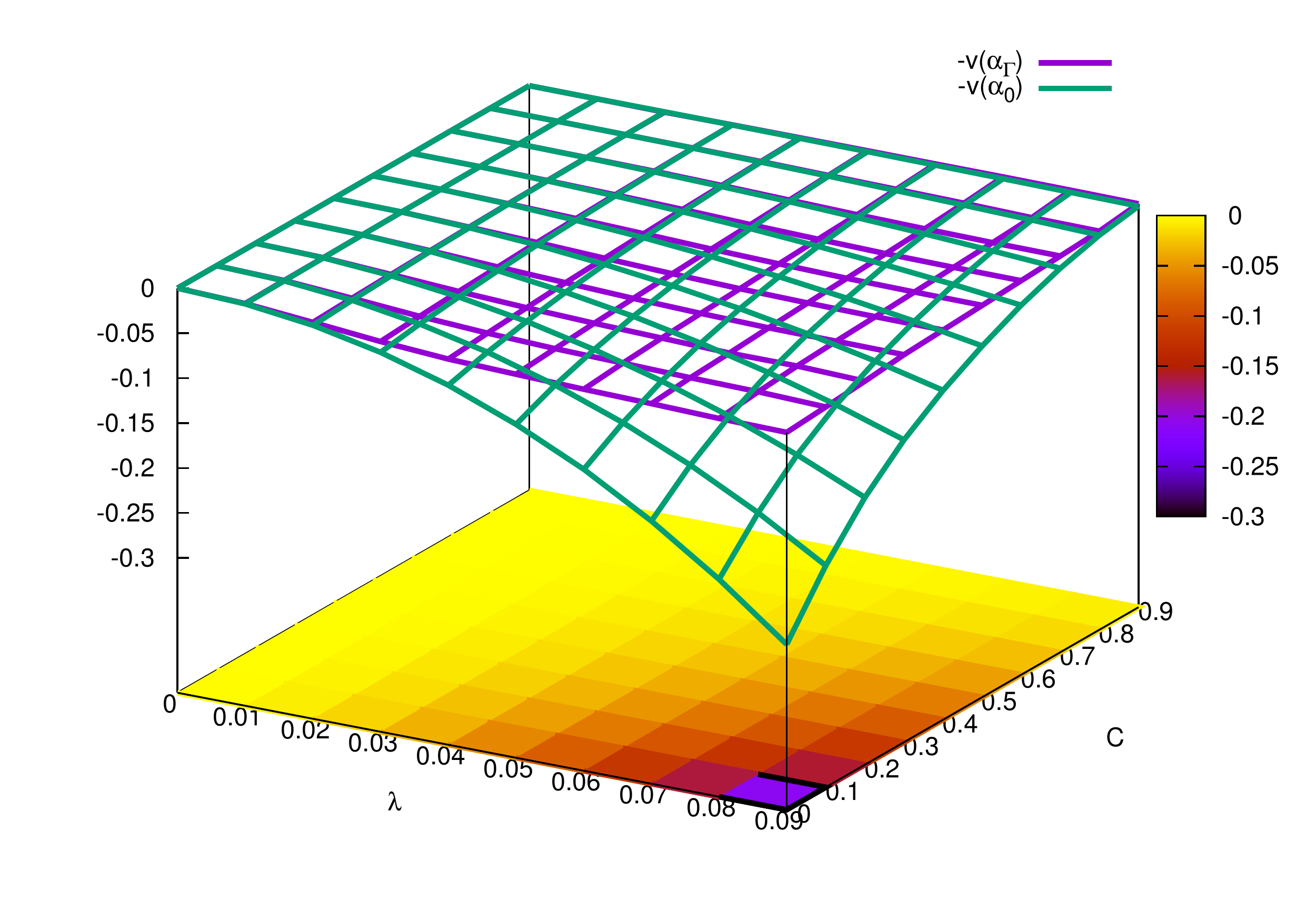}
			\caption{Violet surface: \(-v(\balpha_\Gamma)\) (\(\Gamma = 10E6\), \(v_1 = v_2 =
				100\), \(\beta = 0.6\)) for a liquidity cost function \eqref{eq liquidity 2} in
				terms of
				the value \(\lambda\) of the liquidity cost and the size \(C\) of the compact. 
				Green surface:\(-v(\balpha_0)\).}
			\label{fig:compare_indicator_no_zoom}
		\end{center}
	\end{figure}
	
	\begin{figure}[ht]
		\begin{center}
			\includegraphics[width=10cm,angle=-90]{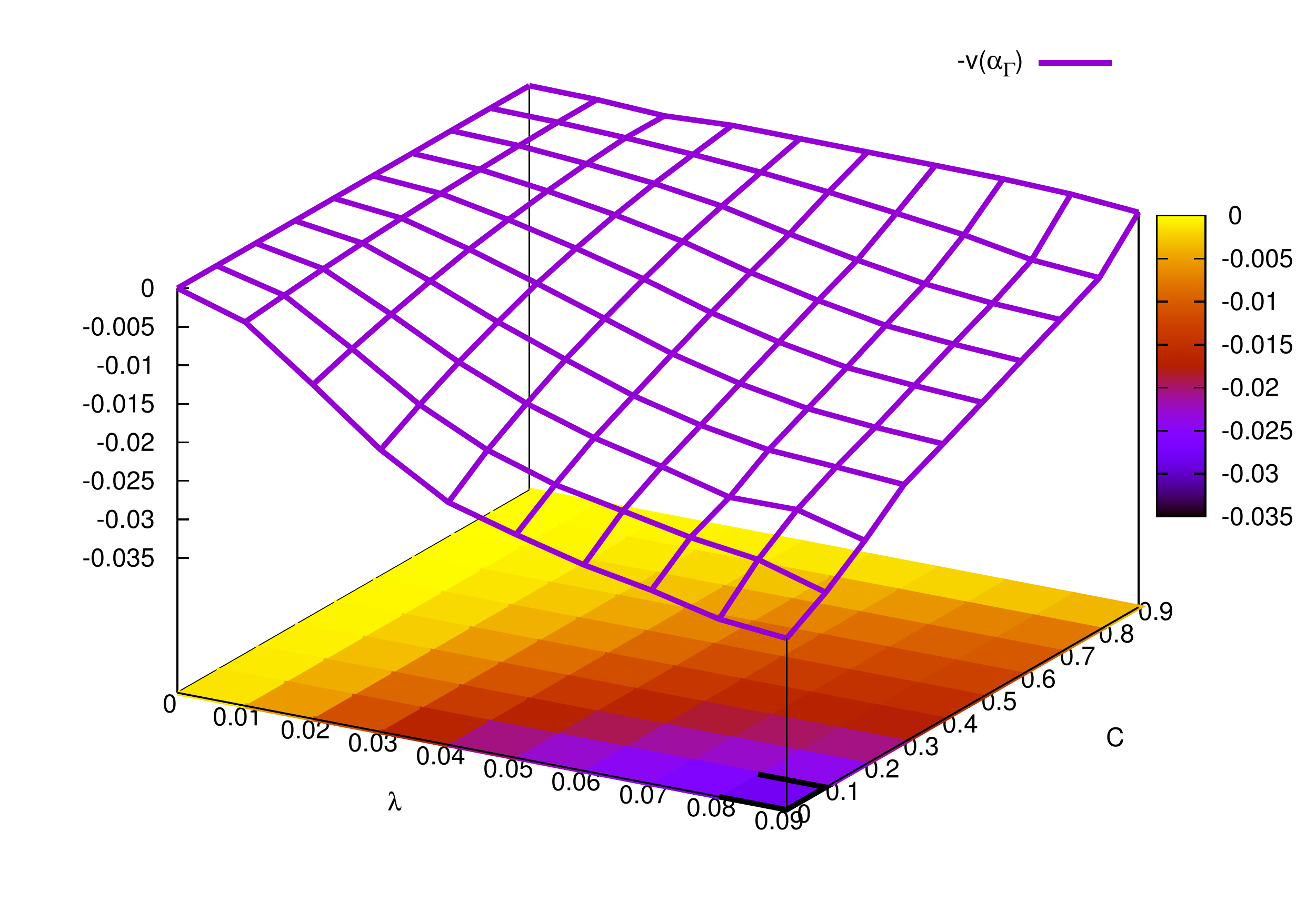}
			\caption{A zoom on the violet surface of
				Figure~\ref{fig:compare_indicator_no_zoom}}
			\label{fig:compare_indicator_zoom}
		\end{center}
	\end{figure}

	\subsection{Influence of the initial value \(\balpha_0\) of the optimization
		procedure}
	In Figure~\ref{fig:compare_different_strat}, we draw two functions 
	of the liquidity cost \(\lambda\): \(-v(\balpha_\Gamma)\)
	and \(-v(\boldsymbol{\delta}_\Gamma)\), where \((\balpha_\Gamma)\) and 
	\((\boldsymbol{\delta}_\Gamma)\) are the parameters obtained after \(\Gamma\) 
	steps of the optimization procedure but
	with different initial values \(\balpha_0\) and \(\boldsymbol{\delta}_0\) as
	described in 
	Subsec.~\ref{sub:50}. 
	The performance
	of the strategies obtained after \(\Gamma=10E6\) steps 
	are quite similar. It means that the sensitivity to the arbitrary initial 
	parameter \(\balpha_0\) is 
	not observable any more
	after \(\Gamma=10E6\) steps.
	\begin{figure}[ht]
		\begin{center}
			\includegraphics[angle=-90,width=0.9\textwidth]{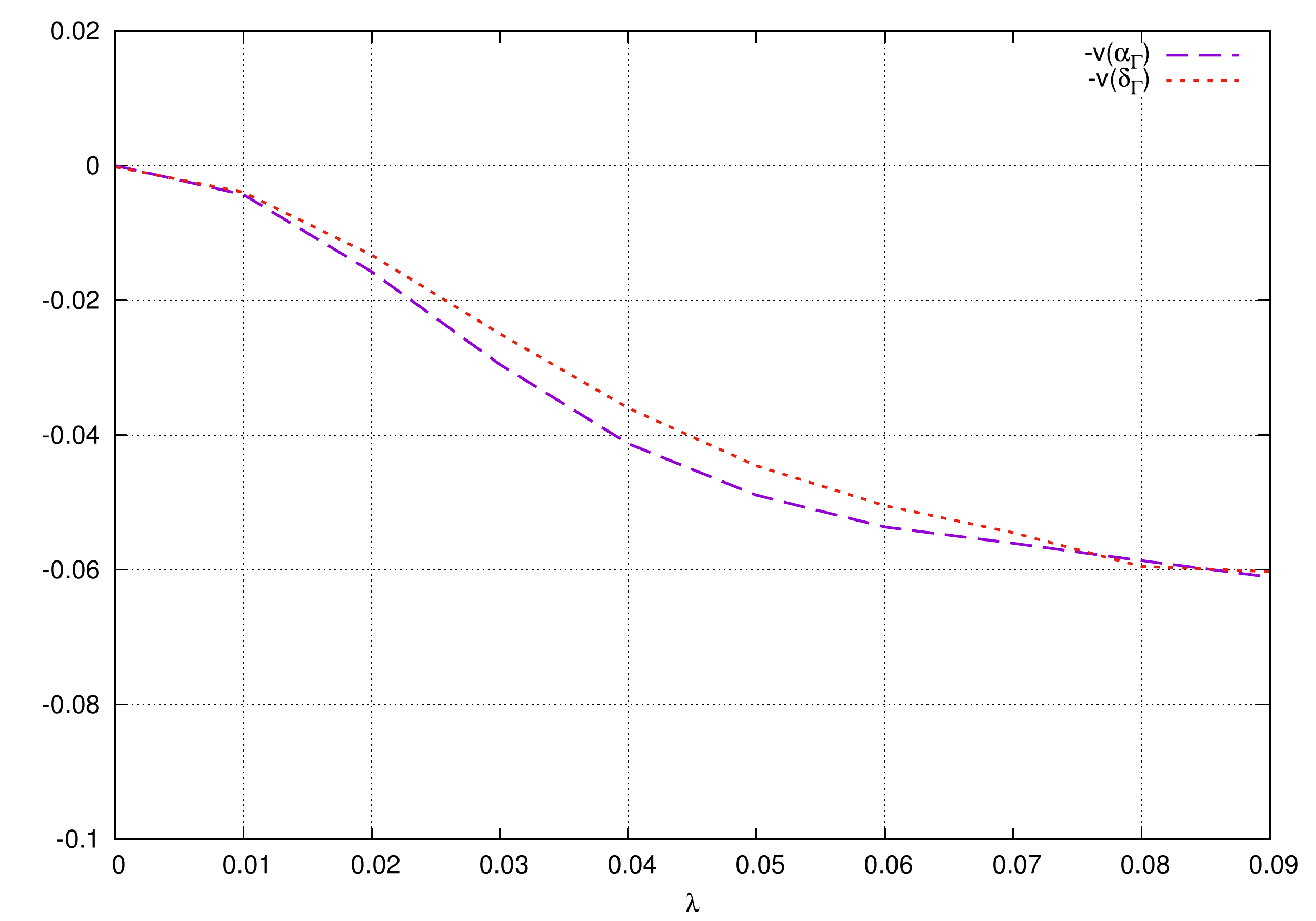}
			\caption{
				(short dashes) \(-v(\boldsymbol{\delta}_\Gamma)\)
				(long dashes) \(-v(\balpha_\Gamma)\), 
				in terms  of the liquidity cost \(\lambda\) 
				(\(\Gamma= 10E6\), \(v_1 = 0.1\), \(v_2 = 100\), \(\beta = 0.6\), \(d=3\)).
			}
			\label{fig:compare_different_strat}
		\end{center}
	\end{figure}
	\subsection{Reducing the set of admissible strategies}
	Recall Hypothesis~\ref{hypo:controlemesurable}.
	So far, our admissible strategies at time \(T_j\) 
	depend on all the past and present rates \(R_t,\cdots,R_{T_j}\).
	Thus the number of parameters
	\(\alphanjji\) to optimize is at least of the
	order of magnitude of the binomial coefficient
	\(\binom{N}{d}\), where \(N\) is the number of dates
	and the degree of truncation \(d\) 
	is defined as in \eqref{eq ensemble de troncation}.
	This order of magnitude is a drastically increasing function of
	\(N\).
	This crucial drawback leads us to try to simplify the complexity
	of the control problem \eqref{eq:optimizationgenerale} by reducing 
	the size of the set of the admissible strategies \(\Pi\).
	Observe that the optimal strategy under the perfect liquidity assumption
	has the property that \(\pi^*(j,i)\) only depends on \(R_{T_j}\).
	This observation suggests to face large numbers of dates
	by reducing the set of controls to controls depending only on a 
	small number of recent interest rates \(R_{T_j},
	R_{T_{j-1}},\cdots,R_{T_{j-q}}\).
	
	Figures~\ref{fig:FiveDatesFromZero} and \ref{fig:TenDatesFromZero} illustrate
	that the optimal control problem \eqref{eq:optimizationgenerale} with admissible
	
	strategies defined as in hypothesis~\ref{hypo:controlemesurable}
	may be used as benchmarks to solve control problems.
	Consider swaps with \(N=5\) and \(N=10\) dates of payment.
	In each one of these two cases, we study the effect of choosing
	\(q=0\) (that is at time \(T_j\), admissible strategies only depend on
	\(R_{T_j}\)), \(q=1\) (admissible strategies depend on \(R_{T_j}\) and
	\(R_{T_{j-1}}\)), \(q=2\). 
	
	Figure~\ref{fig:FiveDatesFromZero}  shows the performance of
	the corresponding   strategies obtained after \(\Gamma=10E6\)  steps of the
	optimization 
	algorithm for a swap with \(N=5\) dates of payments. 
	
	Figure~\ref{fig:TenDatesFromZero}
	shows similar quantities for a swap with \(N=10\)
	dates of payment. 
	We observe that the numerical computation of the optimal strategy is quite
	unstable
	when \(q\) is too big, which 
	reflects the difficulty to solve
	a high dimensional optimization problem. Therefore, one necessarily must choose 
	\(q\) small in order to get accurate approximations of optimal strategies
	belonging to 
	reduced sets of admissible strategies.

	\begin{figure}[ht]
		\begin{center}
			\includegraphics[width=10cm,angle=-90]
			{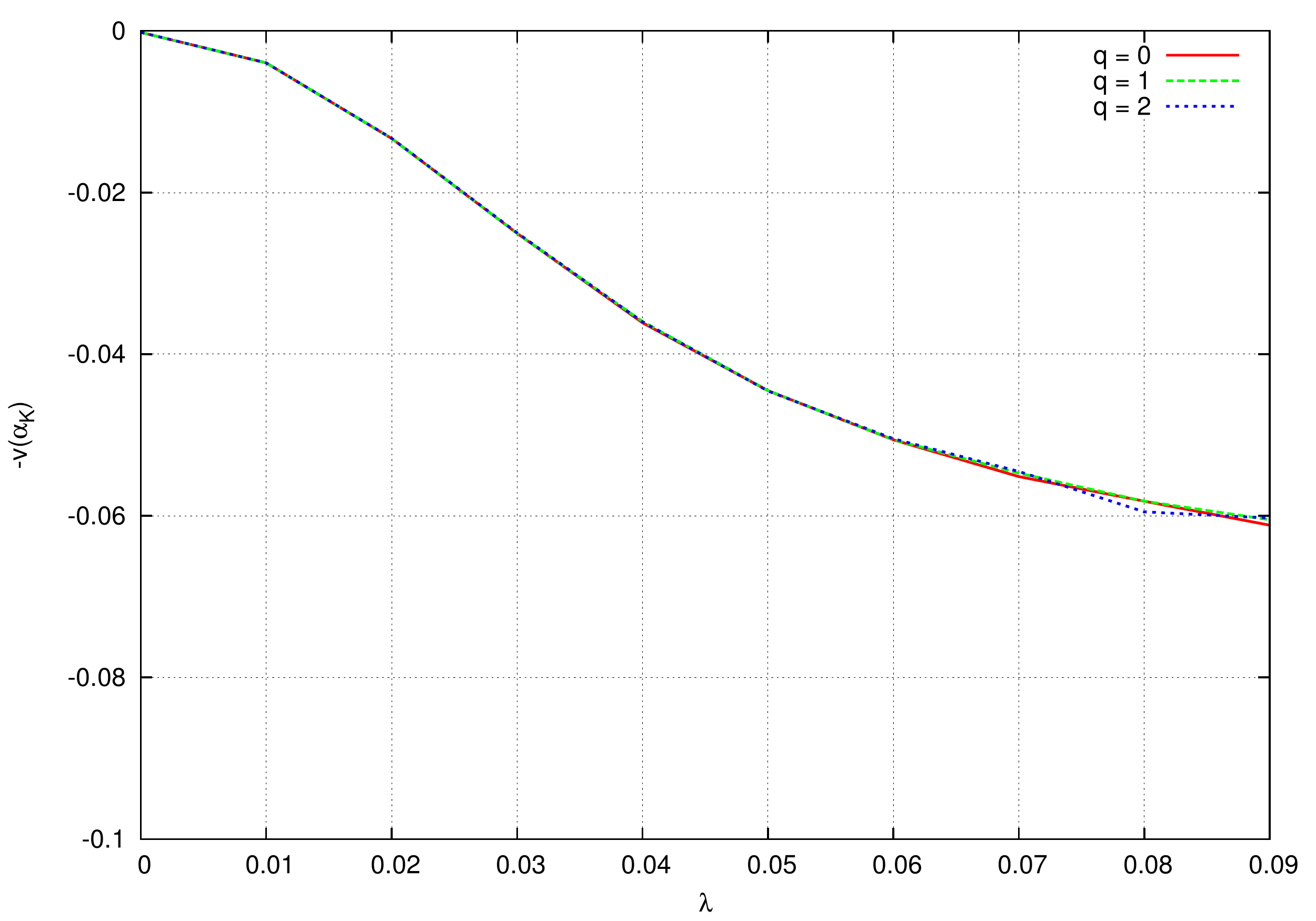}
			\caption{\(-v(\balpha_\Gamma)\) in terms of \(\lambda\)
				for strategies depending only on recent rates \(R_{T_j},\cdots,R_{T_{j-q}}\)
				(\(q=0,1,2\), 
				\(\Gamma=10E6\), \(v_1 = 0.1\), \(v_2 = 100\), \(\beta = 0.6\), \(d=3\))
				for a swap with \(N=5\) dates of payment.
			}
			\label{fig:FiveDatesFromZero}
		\end{center}
	\end{figure}

	\begin{figure}[ht]
		\begin{center}
			\includegraphics[width=10cm,angle=-90]
			{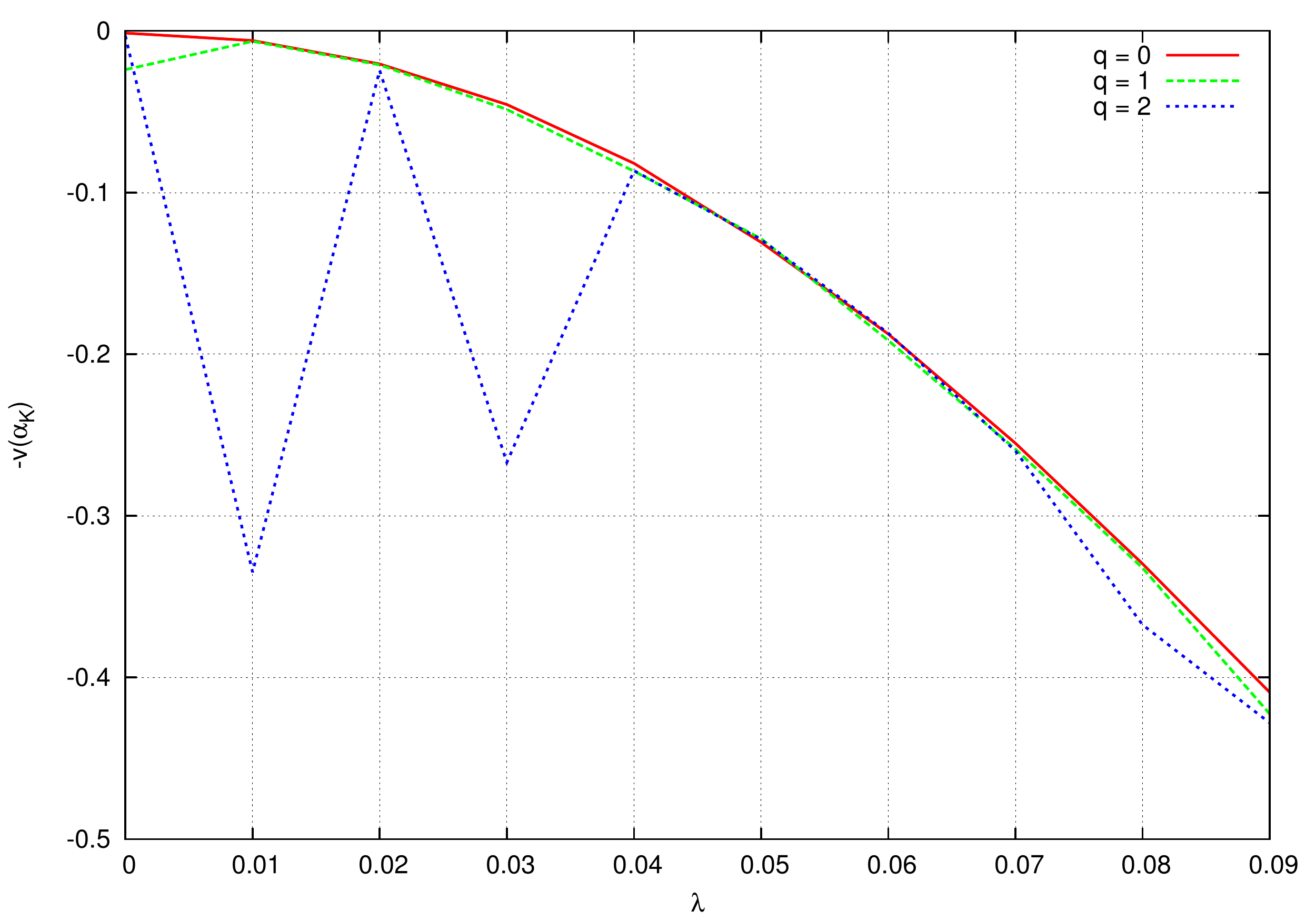}
			\caption{
				\(-v(\balpha_\Gamma)\)  in terms of \(\lambda\)
				for strategies depending only on recent rates \(R_{T_j},\cdots,R_{T_{j-q}}\)
				(\(q=0,1,2\), 
				\(\Gamma=10E6\), \(v_1 = 0.1\), \(v_2 = 100\), \(\beta = 0.6\), \(d=3\))
				for a swap with \(N=10\) dates of payment.
			}
			\label{fig:TenDatesFromZero}
		\end{center}
	\end{figure}

	\clearpage

	\section{Conclusion}\label{sec conclusion}
	Stochastic control problems generally
	have no explicit solutions and are difficult to solve numerically.
	In this paper, we have proposed an efficient algorithm to 
	approximate optimal
	allocation strategies to hedge interest rate derivatives 
	subject
	to liquidity costs.
	
	As discussed above, our methodology is constructive and efficient
	in a Gaussian paradigm. We project the admissible allocation 
	strategies to the space generated by the first Hermite polynomials and
	use a classical stochastic algorithm 
	to optimally choose the coefficients of the projection in order to optimize
	an expected function of the terminal hedging error.
	
	We have illustrated this general approach by studying swaps 
	in the presence of liquidity costs.
	We have discussed  the performances of the numerical method in terms
	of all its algorithmic components.
	
	We emphasize that our methodology can be applied to many control problems,
	e.g., the computation of indifference prices, when the  model under
	consideration
	belongs to a Gaussian space.
	
		\section*{Acknowledgment}
		The authors would like to thank the anonymous referee for her/his careful
		reading 
		and useful remarks.

	\def\refname{References}

	\bibliographystyle{rAMF}
	\bibliography{biblio_calyon}
	
	\section{Appendix}\label{sec appendice}
	\subsection*{Proof of Lemma~\ref{lemma erreur troncature}}
	\begin{proof}
		Let us first prove the result for \(M=0\), that is \(
		X = \exp\left(\mu +  \lambda G\right).
		\)
		\begin{align*}
			X &= \sum_{n=0}^{+\infty}\alpha(n)\dfrac{H_n(G)}{\sqrt{n!}}
		\end{align*}
		where
		\begin{align*}
			\alpha(n) & = \mathbb{E}\left[X\dfrac{H_n(G)}{\sqrt{n!}}\right] = 
			\exp(\mu)\mathbb{E}\left[\exp(\lambda G)\dfrac{H_n(G)}{\sqrt{n!}}\right].
		\end{align*}
		We then use the identity:
		\( 
		e^{\lambda x - \lambda^2/2}
		=  
		\sum_{j \geqslant 0} \frac{\lambda^j}{j!} H_j \left(x\right)
		\) and obtain
		\begin{align*}
			\alpha(n) & = 
			\exp(\mu+\lambda^2/2)\mathbb{E}\left[\sum_j\dfrac{\lambda^j}{j!}H_j(G)\dfrac{H_n(G)}{\sqrt{n!}}\right].
		\end{align*}
		As \((H_j(G)/\sqrt{j!},j\geq 0)\) is an orthonormal basis of \(L^2(G)\) we have
		\[
		\alpha(n)  = 
		\exp(\mu+\lambda^2/2)\dfrac{\lambda^n}{\sqrt{n!}}.
		\]
		Let \(X^d\) be the projection of \(X\) on the subspace generated by the \(H_0,
		\cdots, H_d\)~:
		\begin{align*}
			X^{d} &= \sum_{n=0}^{d}\alpha(n)\dfrac{H_n(G)}{\sqrt{n!}}=
			e^{\mu +
				\lambda^2/2}\sum_{n=0}^{d}\dfrac{\lambda^n}{\sqrt{n!}}\dfrac{H_n(G)}{\sqrt{n!}}\\
			X - X^{d} &= e^{\mu +
				\lambda^2/2}\sum_{n=d+1}^{\infty}\dfrac{\lambda^n}{\sqrt{n!}}\dfrac{H_n(G)}{\sqrt{n!}}.
		\end{align*}
		The truncation error  is
		\begin{align*}
			\|X - X^{d}\|^2_2 &= e^{2\mu +
				\lambda^2}\sum_{n=d+1}^{\infty}\dfrac{\lambda^{2n}}{n!}\\
			&\leq e^{2\mu + \lambda^2}
			\dfrac{\lambda^{2(d+1)}}{(d+1)!}
			\sum_{n=0}^{\infty}\dfrac{\lambda^{2n}}{n!}\\
			&\leq e^{2\mu + 2\lambda^2}
			\dfrac{\lambda^{2(d+1)}}{(d+1)!}.
		\end{align*}
		The desired result thus holds true for \(M=0\). Let \(M\) be a positive
		integer. We have
		\[
		\|X - X^{d}  \|^2_2 = 
		e^{2\mu + 
			\lambda_0^2+\cdots+\lambda_M^2
		}
		\sum_{
			n_0+\cdots+n_M
			>d}\prod_{m=0}^{M}
		\dfrac{\lambda_m^{2n_m}}{(n_m)!}
		\]
		Recall the classical identity
		\begin{equation}\label{eq somme_produit}
			\dfrac{\left(a_0+\cdots+a_L\right)^N}{N!} = \sum_{n_0+\cdots+n_L =
				N}\prod_{m=0}^L\dfrac{a_m^{n_m}}{(n_m)!}.
		\end{equation}
		Thus,
		\begin{align*}
			\sum_{n_0+\cdots+n_M > d}\prod_{m=0}^{M}
			\dfrac{\lambda_m^{2n_m}}{(n_m)!}&=
			\sum_{N=d+1}^{\infty}\sum_{n_0+\cdots+n_{M}=N}\prod_{m=0}^{M}
			\dfrac{\lambda_m^{2n_m}}{(n_m)!}\\
			&=
			\sum_{N=d+1}^{\infty}\dfrac{\left(\lambda_0^2 + \cdots +
				\lambda_{M}^2\right)^N}{N!}\\
			&\leq \dfrac{\left(\lambda_0^2 + \cdots +
				\lambda_{M}^2\right)^{d+1}}{(d+1)!}\exp\left(
			\lambda_0^2 + \cdots + \lambda_{M}^2\right).
		\end{align*}
		This ends the proof for all positive \(M\).
	\end{proof}
\end{document}